\documentclass[11pt]{article}

\usepackage[
	persons={Nima, Sinho, June},
	authors=footnotes,
    bibliosources={refs.bib}
]{pomegranate}

\usepackage{algorithm}
\usepackage{algpseudocode}

% force-texpad-dependency: refs.bib

% -- Preamble

\title{Fast parallel sampling under isoperimetry}
\date{\today}

\Tag<anon>{
	\author{}
}
\Tag{
	\author[1]{Nima Anari}
	\author[2]{Sinho Chewi}
	\author[1]{Thuy-Duong Vuong}
	\affil[1]{Stanford University, \url{{anari,tdvuong}@stanford.edu}}
	\affil[2]{Institute for Advanced Study, \url{schewi@ias.edu}}
}

% -- Macros

\allowdisplaybreaks{}
\DeclareTheorem<definition>{assumption}

\newcommand{\deq}{\coloneqq}
\newcommand{\E}{\operatorname{\mathbb E}}
\newcommand{\Ex}{\operatorname{\mathbb E}}
\renewcommand{\given}{\;|\;}
\renewcommand{\medskip}{}
\newcommand{\mb}[1]{\mathbf #1}
\newcommand{\mf}[1]{\mathfrak #1}
\newcommand{\mc}[1]{\mathcal #1}
\DeclareMathOperator{\law}{law}
\DeclareMathOperator{\polylog}{poly\,log}
\DeclareDelimiter{\Normal}[\mathcal{N}]{\lparen}{\rparen}
\DeclareDelimiter{\L}[\mathcal{L}]{\lparen}{\rparen}
\DeclareDelimiter{\sign}[sign]{\lparen}{\rparen}
\DeclareDelimiter{\mean}[mean]{\lparen}{\rparen}
\DeclareDelimiter{\cov}[cov]{\lparen}{\rparen}
\DeclareDelimiter{\FI}[FI]{\lparen}{\rparen}
\DeclareDelimiter{\Ent}[Ent]{\lbrack}{\rbrack}
\DeclareDelimiter{\Otilde}[\mathnormal{\widetilde{O}}]{\lparen}{\rparen}
\DeclareDelimiter{\DKL}[\mathcal{D}_{\operatorname{KL}}]{\lparen}{\rparen}
\DeclareDelimiter{\dTV}[\mathnormal{d}_{\operatorname{TV}}]{\lparen}{\rparen}

\begin{document}
	\maketitle
	\begin{abstract}
	We show how to sample in parallel from a distribution $\pi$ over $\mathbb{R}^d$ that satisfies a log-Sobolev inequality and has a smooth log-density, by parallelizing the Langevin (resp.\ underdamped Langevin) algorithms. We show that our algorithm outputs samples from a distribution $\hat{\pi}$ that is close to $\pi$ in Kullback--Leibler (KL) divergence (resp.\ total variation (TV) distance), while using only $\log(d)^{O(1)}$ parallel rounds and $\widetilde{O}(d)$ (resp.\ $\widetilde O(\sqrt d)$) gradient evaluations in total. This constitutes the first parallel sampling algorithms with TV distance guarantees.
	
	For our main application, we show how to combine the TV distance guarantees of our algorithms with prior works and obtain RNC sampling-to-counting reductions for families of discrete distribution on the hypercube $\{\pm 1\}^n$  that are closed under exponential tilts and have bounded covariance. Consequently, we obtain an RNC sampler for directed Eulerian tours and asymmetric determinantal point processes, resolving open questions raised in prior works. 
	\end{abstract}
 
	\section{Introduction}

In this paper, we study the problem of designing fast parallel algorithms for sampling from continuous distributions $\pi(x)\propto \exp(-V(x))$ over $x\in \R^d$. Designing efficient sampling algorithms is a ubiquitous problem, but the focus of most prior works has been to minimize sequential efficiency criteria, such as the total number of arithmetic operations or total queries to $V$ and its derivatives (see~\cite{Chewi23Book} for an exposition). In contrast, in this work we focus on parallel efficiency; roughly speaking, this means that we would like to have algorithms that sequentially take polynomial time, but can be run on a pool of polynomially many processors (e.g., as in the \Class{PRAM} model of computation) in much less time, ideally polylogarithmic.

Our main result is to propose simple parallelizations of Langevin Monte Carlo (LMC) and underdamped Langevin Monte Carlo (ULMC), two of the most widely studied sequential sampling algorithms, and to prove that they run in $\log(d)^{O(1)}$ parallel iterations, under standard tractability criteria on $\pi$: that it satisfies a log-Sobolev inequality (LSI), and that its potential $V$ is smooth, i.e., has Lipschitz gradients.

\begin{theorem}[Informal main theorem]\label{thm:informal-main}
    Suppose that $\pi=\exp(-V)$ is a density on $\R^d$ that satisfies a log-Sobolev inequality and has a smooth potential $V$. Assume that we are given (approximate) oracle access to $\nabla V$. Then, we can produce samples from a distribution $\hat{\pi}$ with the following guarantees.
    \begin{itemize}
         \item For LMC, $\hat\pi$ is close to $\pi$ in Kullback--Leibler divergence, and the algorithm uses $\log^2(d)$ parallel iterations and $\widetilde O(d)$ processors and gradient evaluations.
         \item For ULMC, $\hat\pi$ is close to $\pi$ in total variation divergence, and the algorithm uses $\log^2(d)$ parallel iterations and $\widetilde O(\sqrt d)$ processors and gradient evaluations.
    \end{itemize}
\end{theorem}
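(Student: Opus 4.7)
The plan is to simulate the continuous-time (underdamped) Langevin diffusion for a long time $T$ in parallel, by chopping $[0,T]$ into short chunks of length $h \leq 1/(2L)$ (where $L$ is the smoothness of $V$), and within each chunk solving the SDE in its integral form by Picard iteration. Concretely, on a chunk $[0,h]$ starting from $X_0$ we write $X_t = X_0 - \int_0^t \nabla V(X_s)\,ds + \sqrt{2}\,(B_t - B_0)$, discretize into $N$ grid points, and iterate $X_t^{(k+1)} = X_0 - \int_0^t \nabla V(X_s^{(k)})\,ds + \sqrt{2}\,(B_t - B_0)$. The key feature is that all $N$ evaluations of $\nabla V$ inside a single Picard step are independent, so one Picard step costs only a single parallel round and a single gradient query per processor.

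The proof then combines three ingredients. \emph{Contraction of Picard}: since $\nabla V$ is $L$-Lipschitz and $Lh \le 1/2$, the Picard map is a $1/2$-contraction in the sup-norm along the grid, so $O(\log(1/\delta))$ iterations suffice to drive the intra-chunk path error below $\delta$. \emph{Mixing}: under LSI with constant $\alpha$, the continuous Langevin contracts KL at rate $\alpha$, so $T = \widetilde O(1/\alpha)$ is enough; since each chunk has length $\Theta(1/L)$, the number of sequential chunks is $\widetilde O(\kappa)$ with $\kappa = L/\alpha$, yielding total parallel depth $\log^2(d)$ after accounting for the $O(\log d)$ Picard inner loop. \emph{Discretization}: bound the KL (resp.\ TV) divergence between the law of the piecewise Picard output and the law of the true diffusion at time $T$ via Girsanov's theorem applied chunkwise to the drift mismatch, using smoothness to control the squared error and the Gaussian nature of the noise to control exponential moments; this fixes $N$ at the value that yields total gradient complexity $\widetilde O(d)$ for LMC and, via a randomized midpoint estimator of the drift integral, $\widetilde O(\sqrt d)$ for ULMC.

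The main obstacle will be the discretization step in the ULMC case. Obtaining a TV bound with only $\widetilde O(\sqrt d)$ gradient evaluations relies on the variance cancellation of the randomized midpoint method, whose analysis couples the algorithm's velocity process to the true diffusion in a rather delicate way; showing that this cancellation survives when the midpoint query is itself only known approximately, as the output of a Picard iterate, requires a careful second-moment calculation and a coupling between the Brownian motions driving the exact and approximate dynamics. A secondary challenge is running the Picard inner loop with only polylogarithmically many iterations while still guaranteeing that the total error after $\widetilde O(\kappa)$ chunks stays small — this will likely require high-probability (rather than in-expectation) control on the Picard residual, obtained via sub-Gaussian tail bounds on the Brownian increments, together with a warm-start argument so that the KL/TV error does not accumulate linearly in the number of chunks.
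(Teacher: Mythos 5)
Your high-level architecture—chunked Picard iteration with chunk length $h = \Theta(1/\beta)$, geometric Picard contraction driven by $\beta h < 1$, and LSI-driven mixing of the continuous diffusion over time $T = \widetilde O(1/\alpha)$—matches the paper and would give the claimed $\widetilde O(\kappa \log^2 d)$ parallel depth. Your proposed route for the discretization error, however, diverges from the paper in two places, one minor and one that is a genuine gap.

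\emph{Minor:} for the LMC bullet, you propose a chunkwise Girsanov argument. The paper instead uses the interpolation method of Vempala--Wibisono: it interpolates the Picard output into a continuous process and derives a one-step differential inequality $\partial_t \DKL{\mu_t\river\pi} \le -\Theta(\alpha)\DKL{\mu_t\river\pi} + (\text{discretization error})$, so the LSI contraction and the error injection are treated in a single recursion and errors do not accumulate linearly in $T$. A pure Girsanov bound between the algorithm's path measure and the diffusion's path measure over $[0,T]$ grows like $T \cdot (\text{per-step error})$ and then requires an additional warm-start or bias-propagation argument, which you noticed you would need. Both routes can work, but the interpolation argument is the one that gives the clean $\DKL{\mu_{(n+1)h}\river\pi} \le e^{-\Omega(\alpha h)}\DKL{\mu_{nh}\river\pi} + (\text{error})$ recursion.

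\emph{Major gap:} for the ULMC bullet, you attribute the $\widetilde O(\sqrt d)$ gradient complexity to a randomized midpoint estimator of the drift integral. This is precisely the mechanism the paper rejects: it explicitly observes that there appear to be fundamental barriers to obtaining KL or TV guarantees from randomized midpoint, and that this is why they abandon the Shen--Lee approach and keep the interior grid points $s_i$ \emph{deterministic}. With randomized midpoint the output law is a mixture over the random evaluation points, and the variance-cancellation that saves dimension dependence in $W_2$ does not translate into a density-level (TV/KL) bound—there is no coupling or Girsanov argument known to close this gap, so your "careful second-moment calculation and coupling" is not merely delicate, it is an open obstruction. The actual source of the $\sqrt d$ saving in the paper is structural to ULMC with the exponential Euler integrator: because $dX_t = P_t\,dt$, the position increment $X_t - X_{\tau(t)} = \int_{\tau(t)}^t P_s\,ds$ has \emph{no Brownian component}, so $\Ex\|X_t - X_{\tau(t)}\|^2 \lesssim (h/M)^2\,\Ex\|P\|^2 \lesssim (h/M)^2\,d$, one extra factor of $h/M$ better than the LMC interpolant where the Brownian term contributes $\Theta(d\,h/M)$. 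Plugging this into the Girsanov KL bound gives a $d h^2/M^2$ term rather than $d h/M$, which is exactly why $M = \widetilde\Theta(\sqrt{\kappa d}/\varepsilon)$ suffices. Replace "randomized midpoint" with "smoother position process under the exponential Euler integrator + Girsanov" and the ULMC part of your outline becomes sound. Finally, your worry about needing high-probability control on the Picard residual is unnecessary: a second-moment recursion (a bound on $\max_m \Ex\|X^{(k)}_{nh+mh/M} - X^{(k-1)}_{nh+mh/M}\|^2$) feeds directly into the KL/Girsanov error terms and is all the analysis requires.
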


For formal statements, see~\Cref{thm:sample-main} and~\Cref{thm:ulmc_discretization}.
Throughout this paper, when we refer to the number of iterations, we refer to the model of \emph{adaptive complexity}: here, in each round, the algorithm makes a batch of queries to a first-order oracle for $\pi$ (i.e., given a set of finite points $\mathcal X \subseteq \R^d$, the oracle outputs $(V(x), \nabla V(x))$ for each $x\in\mathcal X$), and the adaptive complexity measures the number of rounds.
The gradient complexity measures the total number of points at which the first-order oracle is queried.

As an immediate corollary, we obtain parallel samplers for the class of well-conditioned log-concave distributions, i.e., those which satisfy 
\[ \beta I\succeq \nabla^2 V \succeq \alpha I\,,\]
for some constants $\alpha, \beta>0$, where $\beta$ is the smoothness parameter, and $\alpha$ is the parameter of strong log-concavity. This is because the LSI, a form of isoperimetric inequality, holds for all strongly log-concave distributions, due to the Bakry--\'{E}mery criterion~\citep{BE06}. However, the LSI is a weaker condition than strong log-concavity, and it applies to even many non-log-concave distributions such as Gaussian convolutions of distributions with bounded support~\citep{bardet2015functional,chen2021dimensionfree}. In addition, unlike log-concavity, LSI is preserved under bounded perturbations and Lipschitz transformations of the log-density function.

The state-of-the-art prior to our work was a fast parallel algorithm due to \cite{SL19}, which produced Wasserstein-approximate samples from well-conditioned log-concave distributions. We improve on the state-of-the-art in three ways:
\begin{itemize}
	\item We replace the strong log-concavity assumption with the weaker assumption that $\pi$ satisfies a log-Sobolev inequality.
	\item We bound the error in KL divergence and TV distance, as opposed to the weaker notion of Wasserstein error. This difference is crucial for our main application, as explained in~\Cref{sec:application}.
	\item Our results hold given only approximate access to $\nabla V$, as opposed to exact access. This is again crucial in some of our applications as explained in~\Cref{sec:application}.
\end{itemize}

\subsection{Algorithm}

For the sake of exposition, here we describe the parallel LMC algorithm and defer the discussion of parallel ULMC to~\Cref{sec:ulmc_alg}.

Our algorithm is based on a parallelized discretization of the Langevin diffusion. The continuous-time Langevin diffusion is the solution to the stochastic differential equation
\begin{equation} \label{eq:langevin diffusion}
    dX_t = -\nabla V(X_t)\,dt + \sqrt{2} \,d B_t
\end{equation}
where $(B_t)_{t\geq 0}$ is a standard Brownian motion in $\R^d.$
 %The standard way to discretize the continuous Langevin diffusion is
Langevin Monte Carlo (LMC) is a discretization of the continuous Langevin diffusion, defined by the following iteration:
\begin{equation}
  X_{(n+1) h} - X_{nh} = - h\,\nabla V(X_{nh}) + \sqrt{2}\, (B_{(n+1) h} - B_{nh})\,, 
\end{equation}
where $h > 0$ is a parameter defining the step size.

If $\pi$ satisfies a log-Sobolev inequality (LSI), then the law of the continuous-time Langevin diffusion converges to the target distribution $\pi$ at time $t\approx \polylog(d)$. The discretization error, measured for example in the total variation distance, between the continuous Langevin diffusion and the discrete process, scales like $\approx dh$, so the step size $h$ is set to $1/d$, causing LMC to take $\Otilde{d}$ iterations to converge. Our algorithm, explained in~\Cref{alg:main}, uses parallelization to speed up LMC, so that the step size is $\Omega(1)$ and the parallel depth is of the same order as the convergence time of the continuous Langevin diffusion, that is, of order $\polylog(d)$.  

The input to the algorithm is a (potentially random) starting point $X_0$, together with an ``approximate score oracle'' $s$, which is a function $\R^d\to \R^d$ that we can query, and which is assumed to be uniformly close to the gradient $\nabla V$.

The main idea behind the algorithm is to turn the task of finding solutions to our (stochastic) differential equation into the task of finding fixed points of what is known as the Picard iteration. At a high level, Picard iteration takes a trajectory $(X_t)_{t\geq 0}$ and maps it to another trajectory $(X'_t)_{t\geq 0}$ given by
\[ X'_t = X_0 -\int_{0}^t \nabla V(X_u)\,du + \sqrt 2 \, B_t\,. \]
Now if $X=X'$, then $X$ is a solution to the Langevin diffusion. Thus, one might hope that starting from some trajectory $X_0$, and applying Picard iterations multiple times, the whole trajectory converges to the fixed point. The main benefit of Picard iteration is that $\nabla V$ or $s$ can be queried at all points in parallel.

Note that the Picard iteration can be analogously defined for discrete-time dynamics such as LMC\@. Our main result shows that Picard iteration applied to the discretized Langevin diffusion (LMC) converges fast (in $\polylog(d)$ Picard iterations) for trajectories defined over intervals of length at most $h$, where now $h$ can be take nto be macroscopically large ($h = \Omega(1)$). We repeat this process until time $\polylog(d)$, which requires $N=\polylog(d)/h$ sequential iterations.

\begin{algorithm}[t]
\textbf{Input}: $X_0\sim \mu_0$, approximate score function $s: \R^d \to \R^d$ ($s\approx \nabla V$) \;

\For{$n=0, \dots, N-1$}{
    \For{$m=0, \dots, M$ in parallel}{
        $X_{nh+mh/M}^{(0)} \gets X_{nh}$\;

        Sample Brownian motion $B_{nh+mh/M} \leftarrow B_{nh} + \mathcal{N}(0,(mh/M)\,I)$
    }
    \For{$k=0, \dots, K-1$}{
        \For{$m=0, \dots, M$ in parallel}{
        $X_{nh+mh/M}^{(k+1)}\gets X_{nh}-\frac{h}{M} \sum_{m'=0}^{m-1} s(X_{nh+m'h/M}^{(k)}) + \sqrt{2}\, (B_{nh+mh/M} - B_{nh}) $
        }    
    }
    $X_{(n+1)h} \gets X_{nh+h}^{(K)}$
    \caption{Parallelized Langevin dynamics \label{alg:main}}	
}
\end{algorithm}

\subsection{Analysis techniques}\label{sec:techniques}

Many algorithms for solving stochastic differential equations, such as the Langevin dynamics $(X_t^*)_{t\ge 0}$, turn the problem into numerical integration. The main idea is to approximate the difference between $X_{(n+1) h}^*-X_{nh}^*$ using the trapezoidal rule, i.e.,
\begin{align*}
    X^*_{(n+1) h} - X^*_{nh}
    &= -\int_{nh}^{(n+1) h} \nabla V(X^*_s) \,ds + \sqrt 2 \, (B_{(n+1)h}-B_{nh}) \\
    &\approx -\sum w_i\, \nabla V (X^*_{s_i}) + \sqrt 2 \, (B_{(n+1)h}-B_{nh})\,.
\end{align*}
%We can think of $(x^*_s)$ as the solution of the continuous Langevin dynamics.
Since we cannot access the idealized process $X^*$, we instead start with a rough estimate $X^{(0)}$ and iteratively refine our estimation to obtain $X^{(1)}, \dots, X^{(K)}$ that are closer and closer to the ideal $X^*$. The refined estimations are obtained via another application of the trapezoidal rule, i.e., $X^{(k)} _{s_{i}}$ is computed using $\int_{s\leq s_i} \nabla V (X^{(k-1)}_s)\, ds$. This framework can be easily parallelized: $\nabla V(X^{(k)}_{s_i})$ for different $i$'s can be computed in parallel using one processor for each $s_i$. 

In \cite{SL19}, the points $s_i$ at which to evaluate $\nabla V (X_{s_i})$ are chosen randomly; hence, their framework is known as the \text{randomized midpoint} method. Unfortunately, there seem to be fundamental barriers to obtaining KL or TV accuracy guarantees for randomized midpoint algorithms. To illustrate, while accuracy in $2$-Wasserstein distance can be achieved using $\widetilde O(d^{1/3})$ gradient evaluations using a randomized midpoint algorithm \cite[Algorithm 1]{SL19}, accuracy in KL or TV distance using $o(d^{1/2})$ gradient evaluations is not known.

We deviate from the approach of \textcite{SL19} by keeping the $s_i$ fixed. This greatly simplifies the algorithm and its analysis and allows us to show that parallelized LMC converges to $\pi$ in KL divergence using the interpolation method~\citep{VW19}, at the cost of using $\widetilde O(d)$ gradient evaluations instead of $\widetilde O(\sqrt{d})$\footnote{While~\citet[Algorithm 1]{SL19} needs only $\widetilde O(d^{1/3})$ gradient evaluations, its parallel round complexity is also $\widetilde\Theta(d^{1/3})$, which doesn't align with our goal of getting $\polylog(d)$ parallel round complexity. On the other hand,~\citet[Algorithm 2]{SL19} uses $\polylog(d)$ parallel rounds but needs $\widetilde\Theta(\sqrt{d})$ gradient evaluations \cite[see][Theorem 4]{SL19}.} as in~\citet[Algorithm 2]{SL19}. 
In~\Cref{sec:ulmc}, we then show how to obtain a sampler, based on ULMC\@, which enjoys the same parallel complexity but uses only $\widetilde O(\sqrt d)$ gradient evaluations, matching the state-of-the-art in~\cite{SL19}.

For simplicity of exposition, assume that in~\Cref{alg:main}, the score function $s$ is exactly $\nabla V$.
We will show via induction that 
\begin{equation}\label{ineq:refine gradient bound}
    \E[\|\nabla V(X_{s_i}^{(K)}) - \nabla V(X^{(K-1)}_{s_i})\|^2]
  \lesssim \exp(-3.5K)\,,
\end{equation}
where $K$ is the depth of refinement. In other words, the approximation error decays exponentially fast with the parallel depth.

To obtain the KL divergence bound, note that
\[ X^{(K)}_{nh+(m+1)h/M} - X^{(K)}_{nh+m h/M} = -\frac{h}{M}\,\nabla V(X^{(K-1)}_{nh+mh/M}) + \sqrt{2}\, (B_{nh+(m+1)h/M} - B_{nh+m h/M})  \]
and $\nabla V(X^{(K-1)}_{nh+mh/M})$ only depends on $ X_{nh}$ and the Brownian motion $B_t$ for $t\leq mh/M.$ Let $X_t$, $mh \le t-nh \le (m+1)h$, be the interpolation of $X_{nh+m h/M}^{(K)}$ and $X_{nh+(m+1) h/M}^{(K)}$, i.e.,
\begin{equation*}
  X_t - X_{nh+mh/M}^{(K)} = - (t-nh-mh/M)\, \nabla V (X_{nh+mh/M}^{(K-1)}) +\sqrt{2} \,(B_t -B_{nh+mh/M})\,.
\end{equation*}
Then by a similar argument as in \cite{VW19}, if $\mu_t \deq \law(X_t^{(K)})$ we obtain
\begin{align*}
  \partial_t \DKL{\mu_t \river\pi}  &\leq -\frac{3 \alpha }{2 } \DKL{\mu_t \river\pi} + \E[\|\nabla V(X_t^{(K)}) - \nabla V (X_{nh+mh/M}^{(K-1)}) \|^2]  \\
  &\leq -\frac{3 \alpha }{2 } \DKL{\mu_t \river\pi}
  + 2 \E[\|\nabla V(X_t^{(K)}) - \nabla V (X_{nh+mh/M}^{(K)}) \|^2] \\
      &\qquad{}+ 2\E[\|\nabla V(X_{nh+mh/M}^{(K)}) - \nabla V (X_{nh+mh/M}^{(K-1)}) \|^2]\,.
\end{align*}
We can directly bound the third term using~\cref{ineq:refine gradient bound}. The second term can be bounded via a standard discretization analysis, noting that the time interval is only of size $h/M$.
It leads to the bound
\begin{align}\label{eq:intro_disc_bd}
    \E[\|\nabla V(X_t^{(K)}) - \nabla V (X_{nh+mh/M}^{(K)}) \|^2] \lesssim \frac{dh}{M}\,,
\end{align}
where $M$ is the number of discretization points, i.e., the number of parallel score queries in each round.
Thus, from~\cref{ineq:refine gradient bound} and~\cref{eq:intro_disc_bd}, by setting $K= \Otilde{1}$ and $M=\Otilde{d}$, we can set the step size $h= \Omega (1)$ so that the parallelized Langevin algorithm takes $\Otilde{1}$ steps to converge to the target distribution $\pi$. 

\begin{remark}
	One may wonder if our results apply to distributions satisfying a weaker functional inequality such as the Poincar\'e inequality, instead of the LSI. Unfortunately, this is not the case since our analysis relies on the fact that the continuous-time Langevin diffusion converges to the target distribution $\pi$ in time $\polylog (d)$, which holds under the LSI but not under the weaker Poincar\'e inequality \cite[see][for details]{chewi2021analysis}.	
\end{remark}

%We finally note that although we focus on the Langevin dynamics in this work, we believe our analysis techniques can extend to other diffusion processes beyond the Langevin dynamics, for example, score-matching Langevin diffusion (SMLD) or denoising diffusion probabilistic modeling (DDPM) \cite[see][]{lee2022convergence}.

The above strategy based on the interpolation method no longer works for ULMC\@, so here we instead use an approach based on Girsanov's theorem. See~\Cref{sec:ulmc_analysis} for details.

\subsection{Applications}\label{sec:application}

The main application of our results is to obtain fast parallel algorithms for several discrete sampling problems by refining the framework obtained by \cite{AHLVXY22}. Recently, \cite{AHLVXY22} showed a parallel reduction from sampling to counting for \emph{discrete} distributions on the hypercube $\set{\pm 1}^n$, by combining a faithful discretization of stochastic localization and fast parallel sampling algorithms for continuous distributions.  For a discrete distribution $\mu$ over $\set{\pm 1}^n,$ their reduction involves $\log n$ iterations, each involving sampling from $\tau_w \mu \ast\mathcal{N}(0,c I) $ where $\tau_w \mu$ is the exponential tilt of $\mu$ by the vector $w \in \R^n,$ defined as:
\[ \tau_w \mu (x)\propto \exp(\dotprod{w, x})\, \mu(x)\,. \]
\cite{AHLVXY22} showed that for some appropriately chosen parameter $c=O(1)$, $\tau_w \mu \ast\mathcal{N}(0,c I) $ is a continuous and well-conditioned log-concave distribution for a wide class of discrete distributions $\mu$ of interest, i.e., those that are fractionally log-concave \cite[see][for a survey on fractional log-concavity]{AASV21}. In this way, they obtained a  parallel reduction to the problem of sampling from continuous and well-conditioned log-concave distributions.

The key technical challenge in their work is to control the propagation of errors resulting from the continuous sampler. Samples in an iteration become part of the external field $w$ at future steps. Assuming only the bound on $W_2$ guaranteed by \cite{SL19}, these errors can, in the worst case, be blown up by a factor of $\poly(n)$ in each iteration, resulting in a quasipolynomial blowup by the end. As a result, \cite{AHLVXY22} only manage to obtain $\log(n)^{O(1)}$ parallel time by using $n^{O(\log n)}$, that is \emph{quasipolynomially many}, processors (also known as a \Class{QuasiRNC} algorithm). For some specific distributions $\mu$, specifically strongly Rayleigh distributions~\citep{AHLVXY22}, they circumvent this shortcoming by establishing a property they call transport-stability for the distribution of interest, but several other notable distributions such as Eulerian tours and asymmetric determinantal point processes fall outside the reach of this trick. Here, by replacing the $W_2$ guarantee of \cite{SL19} with a TV distance guarantee, we entirely remove the need for transport-stability, turning the previous \Class{QuasiRNC} algorithms into \Class{RNC} algorithms.

%
%
% to make the reduction go through, they need the following transport-stability property
%\[W_{1,1}{\tau_y \mu, \tau_{y'} \mu} \leq \norm{y-y'}_1.\]
%Unfortunately, they can only show this transport inequality for a much smaller class of distributions, i.e., strongly Rayleigh distributions and their partition constraint variants. Without this transport inequality, their reduction needs to use a quasi-polynomial number of processors.
%On the other hand, the TV-distance guarantee of our parallel sampling algorithm immediately implies
%\[ \dTV{\tau_w \mu, \tau_{w'}\mu} \leq \dTV{w,w'}\]
%where $w'$ is the output of \cref{alg:main}. %, we get a \Class{RNC}-time sampling to counting reduction for all fractionally log-concave distributions, or more generally, for all $\mu$ s.t. $\cov(\tau_y \mu) \leq O(1)\forall y\in \R^n.$

%This inequality is the transport inequality.

 %\cite{AHLVXY22} shows that as long as $\mu$ is fractionally log-concave, $\tau_w \mu \ast\mathcal{N}(0,c I)$ for some appropriate choice of $c.$
%Convergence of the parallelized Langevin in $KL$-distance has important applications to fast parallel (RNC-time) sampling algorithms for discrete distributions \cite{}. %More concretely, our result implies an RNC-time approximate sampling to approximate counting reduction for fractionally log-concave discrete distributions. 

%In addition, we also show that our algorithm can tolerate $L_\infty$-error in $\nabla V.$

Hence, our result implies an \Class{RNC}-time sampler for a fractionally log-concave distribution $\mu$ given access to an oracle which, given input $w\in \R^n,$ approximately computes the partition function of $\tau_w \mu$. This holds more generally for all $\mu$ whose tilts have constantly bounded covariance, i.e., $\cov{\tau_y \mu}\preceq O(1)\, I$, analogous to \cite{AHLVXY22}. 

% , %a.k.a.\ external fields applied to $\mu$. An exponential tilt is defined by a vector $w\in \R^n$, and we denote it by $\tau_w \mu$:
% \[ \tau_w \mu (x)\propto \exp(\dotprod{w, x})\cdot \mu(x). \]
The normalizing factor or partition function of $\tau_w \mu$ is $\sum_{x\in \{\pm\}^n} \exp(\dotprod{w, x})\,\mu(x)$. Viewed as a function of $w$, the partition function is also known as the Laplace transform of $\mu$. We denote the log of the partition function, a.k.a.\ the log-Laplace transform, by $\L_{\mu}{w} = \log \sum_{x\in \{\pm\}^n} \exp(\dotprod{w, x})\,\mu(x)$.
By an abuse of notation, we expand the definition of the Laplace transform to all vectors $w \in (\R \cup \{\pm \infty\})^n$ as follows.
%Note that if $ w_i = \pm\infty$ then $ \tau_w \mu = \tau_{w_{-i}} \mu(\cdot \;|\; x_i=\pm 1),$ where $ w_{-i}$ is the vector obtained by removing $w_i$ from $w,$ and $\mu(\cdot \;|\; x_i =\pm 1)$ is the conditional distribution of $\mu$ conditioned on $x_i = +1 $ or $x_i=-1$ respectively. By some abuse of notation, we expand the definition of the Laplace transform to all vectors $w \in (\R \cup \set{\pm \infty})^n$ as follows.
Let $S$ be the set of coordinates $i$ where $w_i\in \set{\pm \infty}$, then: 
\[\L_{\mu}{w}  = \log \sum_{x\in \{\pm\}^n, \; \sign{x_S} = \sign{w_S}} \exp(\dotprod{w_{-S}, x_{-S}})\, \mu(x)\,.\]

\begin{definition}[Approximate oracle for the Laplace transform] \label{def:approximate laplace transform}
We say that the oracle $\mathcal{O}(\cdot)$ $\varepsilon$-approximately computes the log-Laplace transform at $\mu$ if on input $w$, $\mathcal{O}$ outputs $\exp(\hat{\mathcal{L}})$ s.t.\
\[ \abs{\hat{\mathcal{L}} - \L_{\mu}{w}}\leq \varepsilon\,. \]
\end{definition}
\begin{theorem}\label{thm:rnc discrete sampling}
	Suppose that a distribution $\mu$ on $\set{\pm 1}^n$  has $\cov{\tau_w \mu} \preceq O(1)\, I$ for all $w\in \R^n$, and we have an oracle for $O(\varepsilon/\sqrt{n})$-approximately computing the log-Laplace transform of $\mu$. Then we can sample from a distribution $\varepsilon$-close in total variation distance to $\mu$, in $\log(n/\varepsilon)^{O(1)}$ time using $(n/\varepsilon)^{O(1)}$ processors.
\end{theorem}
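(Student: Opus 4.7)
The plan is to combine the stochastic localization reduction of \cite{AHLVXY22}, which reduces discrete sampling on $\{\pm 1\}^n$ to sampling from $O(\log n)$ continuous tilt-smoothed distributions of the form $\tau_w\mu \ast \mathcal N(0,cI)$, with the TV-accurate parallel sampler for smooth log-concave targets furnished by Theorem~\ref{thm:ulmc_discretization}. Under the hypothesis $\cov{\tau_w\mu} \preceq O(1)\, I$, each $\tau_w\mu \ast \mathcal N(0,cI)$ is, for a suitably small constant $c$, strongly log-concave with smooth potential and dimension-free LSI constant, so the parallel ULMC sampler applies uniformly in $w$ with identical parameters.

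The first step is to translate the log-Laplace transform oracle of Definition~\ref{def:approximate laplace transform} into the approximate score oracle required by Theorem~\ref{thm:ulmc_discretization}. The score of $\tau_w\mu \ast \mathcal N(0,cI)$ at a point $y$ is an affine function of $y$ shifted by the mean of $\tau_{w+y/c}\mu$, and this mean can be read off coordinate-by-coordinate from $\mathcal L_\mu$ evaluated with a single coordinate set to $\pm\infty$ (as in the extended definition given just above Definition~\ref{def:approximate laplace transform}). A $(\varepsilon/\sqrt n)$-accurate log-Laplace oracle thus yields a score estimate whose error is $O(\varepsilon/\sqrt n)$ per coordinate, hence at most $O(\varepsilon)$ in $\ell_2$ uniformly in $y$---the regime in which Theorem~\ref{thm:ulmc_discretization} produces an output within TV distance $O(\varepsilon/\log n)$ of the true continuous target, using $\polylog(n/\varepsilon)$ parallel rounds and $(n/\varepsilon)^{O(1)}$ processors per stage.

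The second step is error accounting along the localization path. Since each of the $O(\log n)$ stages outputs a sample within TV distance $O(\varepsilon/\log n)$ of the intended continuous distribution, the data processing inequality for TV distance implies that the final discrete sample is within $O(\varepsilon)$ of $\mu$. This is precisely where having a TV---rather than $W_2$---guarantee pays off: with only a $W_2$ bound, the discrepancy between an approximate and exact sample could be amplified by $\poly(n)$ when fed back into the external field of the next stage, forcing \cite{AHLVXY22} to invoke transport-stability or use quasipolynomially many processors, whereas in TV the errors simply add along the chain. Composing the parallel rounds---$\polylog(n/\varepsilon)$ per stage times $O(\log n)$ stages---yields the stated $\log(n/\varepsilon)^{O(1)}$ parallel time and $(n/\varepsilon)^{O(1)}$ processor count.

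The main obstacle is the uniform bookkeeping: one must verify that the smoothness, LSI, and score-approximation hypotheses of Theorem~\ref{thm:ulmc_discretization} hold with parameters independent of $w$ along the entire localization trajectory, and that the approximate-score perturbation introduced at each stage remains within the TV budget $\varepsilon/\log n$ after combining with the idealized sampler's discretization error. These uniform bounds are essentially already established in the covariance-bounded tilt framework of \cite{AHLVXY22}; what is genuinely new here is the TV-distance propagation argument, which becomes essentially immediate once Theorem~\ref{thm:ulmc_discretization} is in hand.
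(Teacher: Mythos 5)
Your proposal follows essentially the same route as the paper's proof: Algorithm~\ref{alg:sampling to counting reduction} (the stochastic localization reduction of \cite{AHLVXY22}), Lemma~\ref{lem:computation} (converting the log-Laplace oracle into an approximate score oracle), and a TV error-accumulation argument combined with Lemma~\ref{lem:tv-close}. The one substantive difference is that you instantiate the continuous sampler with parallel ULMC via Theorem~\ref{thm:ulmc_discretization}, whereas the paper's Appendix~C uses parallel LMC (Theorem~\ref{thm:sample-main}) precisely because the initialization is cleaner there: the LMC route needs only the bound of Lemma~\ref{lem:initialization bound for discrete convolve with Gaussian}, which applies to $\mu_0 = \Normal{0, cI}$ without knowledge of the minimizer of $V$, whereas Theorem~\ref{thm:ulmc_final} as stated initializes at $\Normal{x^\star, \beta^{-1} I}\otimes\Normal{0,I}$. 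This is not a fatal objection — one can repeat the $\mc F(\mu_0\|\pi)$ computation of Theorem~\ref{thm:ulmc_final}'s proof with a $\poly(n)$ bound starting from $\Normal{0,cI}\otimes\Normal{0,I}$ — but you should flag that extra bookkeeping rather than treating it as automatic. Two smaller points: the number of localization stages is $T = \Theta(c\log(n/\varepsilon))$, not $O(\log n)$; and the mechanism by which per-stage TV errors accumulate additively is, in the paper, an explicit coupling argument (couple the approximate and exact trajectories to agree with probability $\ge 1-\eta$ per stage, then union-bound), which is equivalent to but a bit sharper than a bare invocation of the data-processing inequality — you also need the triangle inequality across the two different kernels (approximate sampler at stage $i+1$ conditioned on approximate $w_i$, versus the exact kernel conditioned on exact $w_i'$).
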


Thus, we improve upon \cite{AHLVXY22}'s reduction from sampling to counting in two ways:
\begin{itemize}
    \item We remove the assumption that the distribution needs to satisfy a transport inequality, which is only known to hold for strongly Rayleigh distributions and partition-constraint strongly Rayleigh distributions~\citep{AHLVXY22}. Under the weaker assumption of fractional log-concavity or bounded covariance under tilts, 
\cite{AHLVXY22} were only able to show a \Class{QuasiRNC} reduction from sampling to counting, i.e., their sampling algorithm uses $\approx n^{\log n}$ processors. %assumption, which we can only show for distributions related to strongly Rayleigh
    \item We only require an approximate counting oracle (see~\Cref{def:approximate laplace transform}) instead of the exact counting oracle required by \cite{AHLVXY22}. %i.e., we only need an estimate $\hat{ \mathcal{L}}$ of the partition function $\L_{\mu} {y}.$ % i.e.
    %\[ \frac{\hat{ \mathcal{L}}(\tau_y \mu)}{\mathcal{L}({\tau_y \mu})} \in [1-\epsilon, 1+\epsilon] .\]
\end{itemize}
\Cref{thm:rnc discrete sampling} implies
the following corollary about asymmetric determinantal point processes (DPPs) and Eulerian tours \cite[see][for details and definitions]{AHLVXY22}.
%Through the reduction \cref{thm:rnc discrete sampling}.
\begin{corollary}\label{cor:app2}
	Suppose that $\mu$ is an asymmetric DPP on a ground set of size $n$ or the distribution of uniformly random Eulerian tours in a digraph of size $n$. Then, we can sample from a distribution $\varepsilon$-close in total variation distance to $\mu$ in time $\log(n/\varepsilon)^{O(1)}$ using $(n/\varepsilon)^{O(1)}$ processors.
\end{corollary}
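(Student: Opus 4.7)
The plan is to reduce the corollary directly to \Cref{thm:rnc discrete sampling}, so the work consists of verifying its two hypotheses for each of the two distribution classes: (i) the bounded-covariance condition $\cov{\tau_w \mu} \preceq O(1)\, I$ uniformly in $w$, and (ii) the existence of an $O(\varepsilon/\sqrt{n})$-approximate oracle for the log-Laplace transform $\L_\mu{w}$ that runs in $\log(n/\varepsilon)^{O(1)}$ parallel time with $(n/\varepsilon)^{O(1)}$ processors.

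For the bounded-covariance hypothesis, I would invoke the structural facts already collected in \cite{AHLVXY22}. For an asymmetric DPP with kernel $L$, any tilt $\tau_w \mu$ is again an asymmetric DPP whose kernel is obtained from $L$ by conjugating with a diagonal matrix of the form $\diag(e^{w_i})$; the bounded-covariance bound for asymmetric DPPs then transfers to all tilts. For the Eulerian-tour distribution on a digraph, the tilt by $w$ corresponds to weighting tours by $\exp(\langle w,x\rangle)$, which is the distribution of Eulerian tours in an edge-weighted digraph, a class again known to satisfy the same $O(1)$ covariance bound.

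For the counting oracle, both classes admit exact determinantal formulas. For asymmetric DPPs the partition function of $\tau_w \mu$ is (up to a fixed factor) $\det(I + D_w L D_w)$ with $D_w = \diag(e^{w_i})$, and for Eulerian tours the BEST theorem expresses the (weighted) count as a product of factorials and a Matrix-Tree determinant. Both determinants lie in \textsf{NC} via Csanky's algorithm, and can be evaluated to accuracy $O(\varepsilon/\sqrt n)$ in $\log^{O(1)}(n/\varepsilon)$ parallel time using $(n/\varepsilon)^{O(1)}$ processors once the requisite arithmetic precision is chosen. Coordinates $w_i \in \{\pm\infty\}$ in \Cref{def:approximate laplace transform} correspond to pinning elements in or out, which for both classes yields another distribution of the same type (a smaller asymmetric DPP, respectively an Eulerian-tour distribution on a contracted or edge-deleted digraph), so the same oracle applies after a preprocessing step.

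Once the two hypotheses are verified, \Cref{thm:rnc discrete sampling} is applied off the shelf to conclude the stated parallel complexity. The main obstacle is the numerical precision issue in the oracle: to achieve $O(\varepsilon/\sqrt n)$ additive accuracy in $\log \L_\mu{w}$ one must bound the condition number of the relevant matrices and propagate this through Csanky-style parallel linear algebra; this is routine given the explicit bounds $\|w\|_\infty = O(\log(n/\varepsilon))$ that arise from the stochastic-localization reduction used inside \Cref{thm:rnc discrete sampling}, but it is the only place where care is required.
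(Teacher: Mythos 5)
Your proposal is correct and follows the same route the paper takes: reduce to \Cref{thm:rnc discrete sampling}, verify bounded covariance of tilts and the existence of a parallel (approximate) Laplace-transform oracle for the two families, with both hypotheses drawn from the structural facts and determinantal counting formulas established in \cite{AHLVXY22}. The paper itself gives essentially no proof beyond citing \cite{AHLVXY22}, so your fleshed-out version (kernel conjugation $D_w L D_w$ for tilted aDPPs, Matrix-Tree/BEST for Eulerian tours, Csanky-style parallel determinants, and the precision/$\|w\|_\infty$ bookkeeping) is just a more explicit rendering of the intended argument.
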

Hence, we resolve \cite{AHSS21}'s question about designing an \Class{RNC} sampler for directed Eulerian tours.

Note that for the distributions studied in \cite{AHLVXY22}, counting can be done exactly via determinant computations, or in other words, there is exact access to the log-Laplace transform. But there are several non-exact approximate counting techniques in the literature that can be efficiently parallelized. A notable one is Barvinok's polynomial interpolation method \cite[see, e.g.,][]{BB21}. As an example of a distribution where Barvinok's method can be applied, consider a distribution $\mu$ on the hypercube $\set{\pm 1}^n$ defined by a polynomial Hamiltonian: $\mu(x)= \exp(p(x))$. \cite{BB21} showed that for quadratic and cubic polynomials $p$, assuming the coefficients of degree $2$ and $3$ terms are not too large (see \cite{BB21} for exact conditions), $\sum_{x\in \{\pm\}^n} \mu(x)$ can be approximately computed in quasipolynomial time. It can be observed that the approximation algorithm can be parallelized into a \Class{QuasiRNC} one since it simply involves computing $n^{\log n}$ separate quantities. We note that because the condition on $p$ does not involve the linear terms, we can also apply the same algorithm to $\tau_w \mu$, whose potential differs from $\mu$ only in the linear terms. In other words, Barvinok's method gives us the oracle in~\Cref{def:approximate laplace transform}. In the same paper, \cite{BB21} prove that the partition functions of these models are root-free in a sector, a condition known as sector-stability, which is known to imply fractional log-concavity~\citep{AASV21}. As a result, by plugging in Barvinok's approximate counting algorithm into our result, we obtain \Class{QuasiRNC} sampling algorithms, which at least in the case of cubic $p$ were not known before.
	\section{Preliminaries}

We let $\log$ denote the natural logarithm. For $x\in\R^d$, $\norm{x}$ denotes the usual Euclidean norm of $x$.

For two distributions $\rho$ and $\pi$, we use $\dTV{\rho, \pi}$ to denote their total variation distance defined as $\sup\set{\rho(E)-\pi(E)\given E~\text{is an event}}$.

A stronger notion of distance is the Kullback--Leibler (KL) divergence.

\begin{definition}[Kullback--Leibler divergence]
	For two probability densities $\rho, \pi$ we define
	\[ \DKL{\rho\river \pi}=\E_{\rho}\log(\rho/\pi)\,. \]
\end{definition}
We have the following relation between the KL divergence and TV distance, known as the Pinsker inequality.
\[ \dTV{\rho, \pi}\leq \sqrt{\frac{1}{2}\DKL{\rho\river \pi}}\,. \]

\subsection{Log-concave distributions}
Consider a density function $\pi: \R^d \to \R_{\geq 0}$ where $\pi(x)=\exp(-V(x))$. We call $V$ the potential function for $\pi$. Throughout the paper, we will assume that $V$ is twice continuously differentiable for simplicity of exposition.

\begin{definition}[Smoothness]
	For $\beta>0$, we say $\pi$ is $\beta$-smooth if the gradients of the potential are $\beta$-Lipschitz, that is
	\[ \norm{\nabla V(x)-\nabla V(y)}\leq \beta\, \norm{x-y}\,, \qquad \text{for all}~x,y\in\R^d\,. \]
	For twice differentiable $V$, this is equivalent to
	\[ -\beta I\preceq \nabla^2 V \preceq \beta I\,. \] 
\end{definition}

When $V$ is convex, we call $\pi$ a log-concave density. A strengthening of this condition is:
\begin{definition}[Strong log-concavity]
	For $\alpha>0$, we say $\pi$ is $\alpha$-strongly log-concave if 
\[  0\prec \alpha I \preceq \nabla^2 V\,.\]
\end{definition}
%Throughout the paper, we will assume $V$ is a twice continuously differentiable function from $\R^d$ to $\R$ that has a $\beta $-Lipschitz gradient and is $\alpha$-strongly convex i.e.
%\[ 0 \prec \alpha I \preceq \nabla^2 V \preceq \beta I.\]

%We say $\pi$ is (continuous) log-concave if $V$ is convex, i.e., $\nabla^2 V \succeq 0$ and $\alpha$-strongly log-concave if $ \nabla^2 V \succeq \alpha I $ for some $\alpha > 0.$ 

  \subsection{Log-Sobolev and transport-entropy inequalities}

\begin{definition}[Log-Sobolev inequality]
We say $\pi$ satisfies a log-Sobolev inequality (LSI) with constant $\alpha$ if for all smooth $f: \R^d \to \R$,
\[\Ent_{\pi} {f^2} \deq \E_{\pi}[f^2\log (f^2/ \E_{\pi}(f^2))] \leq \frac{2}{\alpha} \E_{\pi}[\norm{\nabla f}^2]\,. \]
\end{definition}

By the Bakry--\'Emery criterion~\citep{BE06}, if $\pi$ is $\alpha$-strongly log-concave then $ \pi $ satisfies LSI with constant $\alpha$.
The right-hand side of the above inequality can also be written as the relative Fisher information.

\begin{definition}[Relative Fisher information]
The relative Fisher information of $\rho$ w.r.t.\ $\pi$ is
\begin{equation}\label{ineq:log sobolev kl vs fisher}
  \FI{\rho \river \pi} = \E_{\rho}[\norm{\nabla \log (\rho/\pi)}^2]\,.
\end{equation}
\end{definition}

The LSI is equivalent to the following relation between KL divergence and Fisher information:
\[ \DKL{\rho \river \pi} \leq \frac{1}{2\alpha} \FI{\rho \river \pi} \qquad\text{for all probability measures}~\rho\,.\]
Indeed, take $f = \sqrt{\rho/\pi}$ in the above definition of the LSI\@.

\begin{definition}[Wasserstein distance]
We denote by $W_2$ the Wasserstein distance between $ \rho$ and $\pi$, which is defined as
\[W_2^2 (\rho, \pi) =\inf\bigl\{\E_{(X, Y)\sim \Pi}[\norm{X-Y}^2 ]\given \Pi \text{ is a coupling of }\rho, \pi\bigr\}\,, \]
where the infimum is over coupling distributions $\Pi$ of $(X,Y)$ such that $X\sim \rho, Y\sim \pi$.
\end{definition}

The log-Sobolev inequality implies the following transport-entropy inequality, known as Talagrand's $T_2$ inequality~\citep{OTTO2000361}:
\begin{equation} \label{ineq:talagrand}
  \frac{\alpha}{2}\, W_2^2 (\rho, \pi) \leq  \DKL{\rho \river \pi}\,.
\end{equation}
	\section{Parallel sampling guarantees}

In this section, we formally state our main parallel sampling guarantees.

\subsection{LMC}\label{sec:lmc}

We state the formal version of~\Cref{thm:informal-main} for LMC as~\Cref{thm:sample-main}.
%Here we prove \cref{thm:informal-main}. We build up to its formal version (stated in \cref{thm:sample-main}).
Our assumption throughout is that the score function $s$ is a pointwise accurate estimate of $\nabla V$:
\begin{assumption}
The score function $s:\R^d \to \R$ satisfies $\norm{s(x) - \nabla V(x)} \leq \delta$ for all $x\in \R^d.$
\end{assumption}

\begin{theorem} \label{thm:sample-main}
Suppose that $V$ is $\beta$-smooth and $\pi$ satisfies a log-Sobolev inequality with constant $\alpha$, and the score function $s$ is $\delta$-accurate. Let $\kappa \deq \beta/\alpha$. %.be the condition number. 
    Suppose
    \begin{equation}
    \begin{aligned}
     \beta h &\leq 1/10\,,&\qquad \delta&\leq 2\sqrt\alpha\varepsilon\,, &\qquad
     M&\geq 7\max \{\kappa d/\varepsilon^2, \kappa^2\}\,, \\[0.5em]
     K &\geq 2+ \log M\,, &&&\qquad Nh &\geq \alpha^{-1}\log \frac{2 \DKL{\mu_{0} \river\pi}}{\varepsilon^2}\,.
     \end{aligned}
    \end{equation}
Then, the output distribution $\mu_{Nh}$ of~\Cref{alg:main} satisfies % $ X_{Nh}$ 
\[\max\Bigl\{
\frac{\sqrt{\alpha}}{2}\, W_2(\mu_{Nh}, \pi ), \dTV{\mu_{Nh}, \pi} \Bigr\}\leq \sqrt{\frac{\DKL{\mu_{Nh} \river \pi}}{2}}\leq \varepsilon. \] 
 \end{theorem}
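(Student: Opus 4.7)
The right-hand chain of inequalities in the statement is simply Pinsker's inequality together with Talagrand's $T_2$ inequality~\cref{ineq:talagrand}, both of which hold whenever $\pi$ satisfies LSI with constant $\alpha$. Hence everything reduces to establishing $\DKL{\mu_{Nh}\river\pi}\le 2\varepsilon^2$. My plan has three ingredients: (i) a geometric contraction statement for the inner Picard loop, yielding~\cref{ineq:refine gradient bound}; (ii) the Vempala--Wibisono interpolation method~\citep{VW19} to produce a differential inequality for the relative entropy along one outer step; and (iii) an iterated Gr\"onwall argument exploiting LSI to propagate the bound across all $N$ outer steps.

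For (i), I fix an outer index $n$ and write $Y_m^{(k)}\deq X_{nh+mh/M}^{(k)}$. The Picard recurrence gives, for $k\ge 1$,
\[Y_m^{(k+1)}-Y_m^{(k)}=-\frac{h}{M}\sum_{m'=0}^{m-1}\bigl[s(Y_{m'}^{(k)})-s(Y_{m'}^{(k-1)})\bigr].\]
Cauchy--Schwarz, the $\beta$-Lipschitz property of $s$ (inherited from $\nabla V$ up to the uniform score error $\delta$), and $m/M\le 1$ yield the contraction
\[\max_m\Ex\bigl[\|Y_m^{(k+1)}-Y_m^{(k)}\|^2\bigr]\le(\beta h)^2\,\max_m\Ex\bigl[\|Y_m^{(k)}-Y_m^{(k-1)}\|^2\bigr].\]
With $\beta h\le 1/10$ this is a factor of at most $1/100$ per step. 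The base case $k=1$ is handled by bounding $\Ex\|Y_m^{(1)}-Y_m^{(0)}\|^2$ directly in terms of $h^2\,\Ex\|\nabla V(X_{nh})\|^2$ and $dh$; iterating $K-1$ times and invoking $K\ge 2+\log M$ absorbs the polynomial factors and yields~\cref{ineq:refine gradient bound}. Tracking these constants carefully and ensuring the bound is uniform in $m$ is the main technical hurdle.

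For (ii), I define the continuous interpolation $(X_t)_{t\in[nh,(n+1)h]}$ on each sub-interval $[nh+mh/M,nh+(m+1)h/M]$ by
\[dX_t=-s\bigl(Y_m^{(K-1)}\bigr)\,dt+\sqrt{2}\,dB_t,\]
so that $X_{nh+mh/M}=Y_m^{(K)}$ at the grid points. Letting $\mu_t\deq\law(X_t)$, the standard Vempala--Wibisono computation combined with LSI gives
\[\partial_t\DKL{\mu_t\river\pi}\le-\frac{3\alpha}{2}\DKL{\mu_t\river\pi}+3\,\Ex\bigl[\|\nabla V(X_t)-s(Y_m^{(K-1)})\|^2\bigr].\]
The forcing splits by the triangle inequality into three contributions: $\delta^2$ from the score error; a one-step discretization error $\Ex\|\nabla V(X_t)-\nabla V(Y_m^{(K)})\|^2\lesssim \beta^2\,dh/M$ proven by a standard It\^o bound on an interval of length $h/M$; and the refinement error $\Ex\|\nabla V(Y_m^{(K)})-\nabla V(Y_m^{(K-1)})\|^2\lesssim\exp(-3.5K)$ from~(i).

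For (iii), Gr\"onwall on $[nh,(n+1)h]$ followed by iteration over $n$ yields
\[\DKL{\mu_{Nh}\river\pi}\le e^{-3\alpha Nh/2}\DKL{\mu_0\river\pi}+\frac{C}{\alpha},\]
where $C$ aggregates the three forcing contributions. The hypothesis $Nh\ge\alpha^{-1}\log(2\DKL{\mu_0\river\pi}/\varepsilon^2)$ makes the first term at most $\varepsilon^2/2$, while the choices $\delta\le 2\sqrt\alpha\varepsilon$, $M\ge 7\max\{\kappa d/\varepsilon^2,\kappa^2\}$, and $K\ge 2+\log M$ are calibrated so that each of the three forcing contributions is at most $\alpha\varepsilon^2/2$, giving $\DKL{\mu_{Nh}\river\pi}\le 2\varepsilon^2$ as desired.
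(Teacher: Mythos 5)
Your overall architecture matches the paper's: a Picard contraction lemma (paper's Lemma~\ref{lem:parallel refinement}), the Vempala--Wibisono interpolation method producing a differential inequality for $\DKL{\mu_t\river\pi}$, and a two-level iteration in $m$ and $n$. The ingredients (i)--(iii) and the three-way decomposition of the forcing term are exactly right.

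The one place your sketch glosses over a real technical hurdle is the claim that the one-step discretization error satisfies $\Ex[\|\nabla V(X_t)-\nabla V(Y_m^{(K)})\|^2]\lesssim \beta^2 dh/M$ ``by a standard It\^o bound.'' The increment $X_t - Y_m^{(K)}$ has both a Brownian contribution of size $d\cdot(t-\tau)$ and a drift contribution of size $(t-\tau)^2\,\Ex[\|s(Y_m^{(K-1)})\|^2]$, and controlling the latter requires a handle on $\Ex[\|\nabla V(X_t)\|^2]$. This cannot be bounded by $O(\beta d)$ a priori: the paper uses $\Ex[\|\nabla V(X_t)\|^2]\le \FI{\mu_t\river\pi}+2\beta d$ and, crucially, does \emph{not} apply the LSI at the stage you do. It retains $-\tfrac34\FI{\mu_t\river\pi}$ in the differential inequality so that the small multiple of $\FI{\mu_t\river\pi}$ re-entering through the discretization term can be absorbed, only then converting the residual Fisher coefficient to $-1.3\alpha\DKL{\mu_t\river\pi}$. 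Applying LSI immediately, as your step (ii) does, forecloses this absorption. Relatedly, the base case $\mathcal E_1$ of the Picard contraction also carries a $\DKL{\mu_{nh}\river\pi}$ term (through $\Ex[\|\nabla V(X_{nh})\|^2]$ and the $T_2$ inequality), which propagates into the forcing of the outer Gr\"onwall step and is why the iterated bound in the paper decays at rate $e^{-\alpha Nh}$ rather than $e^{-3\alpha Nh/2}$ (and why the hypothesis $M\ge 7\kappa^2$ appears). Your final constant $C$ is therefore not a constant but contains a term proportional to $\DKL{\mu_{nh}\river\pi}$ that must be pushed into the contraction factor.

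These are not fatal --- the fix is precisely to delay the application of LSI and carry the Fisher-information term until after all the gradient-moment bounds have been invoked --- but as written, the discretization-error estimate and the claimed decay rate are both incorrect. Once you re-derive (ii) keeping $-\tfrac34\FI{\mu_t\river\pi}$ explicit and track the $\DKL{\mu_{nh}\river\pi}$ contributions through both the discretization bound and the base case $\mathcal E_1$, your argument coincides with the paper's.
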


To make the guarantee more explicit, we can combine it with the following well-known initialization bound, see, e.g.,~\citet[\S 3.2]{Dwietal19MALA}.
% For $V$ being concave, we can use \cite[Lemma 29]{chewi2021analysis}.
% \begin{lemma}
%    Suppose $0\preceq \nabla^2 V \preceq \beta  I.$ Let $x^*$ be the stationary point of $V$, i.e., $\nabla V(x^*)=0$ then for $\mu_0 = \Normal{x^*, \beta^{-1} I}$
%    \June{todo: cite 2nd moment bound for convex function}
%    \[ \DKL {\mu_0 \river\pi }\leq \log (\sup_x \frac{\mu_0(x)}{\pi(x) }) \leq 2+ \log  \]
% \end{lemma}
%In general, we have the following bound from . 
%\begin{lemma}\label{lem:initialization bound general}
%    Let $\mu_0 = \Normal{y, \sigma^2 I }$ for some fixed $y, \sigma^2.$ If $\pi(x) = \exp(-V(x))$ with $ \nabla^2 V(x) \preceq \beta I \forall x\in \R^n$ then 
%    \[\DKL {\mu_0 \river \pi} \leq V(y) +\log Z+ \frac{d}{2} (\sigma^2 \beta - \log (2\pi e\sigma^2) )\]
%    where $Z = \int  \exp(-V(x)) dx.$ %$\pi (y) = \frac{\exp(-V(y))}{\sum_x \exp(-V(x))}$
%\end{lemma}
%\begin{proof}
%    We have
%    $V(x) \leq V(y) + \langle\nabla V(y), x-y\rangle + \frac{\beta}{2} \norm{x-y}^2  $
%    thus
%    \[\E_{x\sim \mu_0} {V(x) }\leq V(y) + \langle \nabla V(y), \E_{x\sim \mu_0} {x- y} \rangle + \frac{\beta}{2}  \E_{x\sim \mu_0} {\norm{x-y}^2} = V(y) + \frac{\beta \sigma^2 d}{2}\]
%    and %for $Z = \sum_x \exp(-V(x))$
%    \[\DKL { \mu_0\river \pi}= \E_{x\sim \mu_0} {\log \mu_0(x) + V(x) + \log Z } = -\frac{d}{2} \log (2\pi e\sigma^2) + V(y) + \frac{\beta\sigma^2 d}{2} + \log Z  \]
%\end{proof}
\begin{corollary}\label{cor:lmc_guarantee}
 Suppose that $\pi = \exp(-V)$ with $0 \prec \alpha I \preceq \nabla^2 V \preceq \beta I$, and let $\kappa \deq \beta/\alpha$. Let $x^\star$ be the minimizer of $V$. Then, for $\mu_0 = \mc N(x^\star, \beta^{-1} I)$, it holds that $ \DKL{\mu_0\river \pi} \leq 
 \frac{d}{2}\log \kappa$. %Set $M , N, h, K$ according to \cref{thm:sample-main}.  

Consequently, setting
 \[ h = \frac{1}{10 \beta}\,, \quad \delta = 2\sqrt\alpha\varepsilon\,, \quad M = 7\max\bigl\{\frac{\kappa d}{\varepsilon^2}, \kappa^2\bigr\}\,, \quad K = 3\log M\,,\quad N =  10 \kappa \log  \frac{d \log \kappa }{\varepsilon^2}\,, \]
 then~\Cref{alg:main} initialized at $\mu_0$ outputs $\mu_{Nh}$ satisfying
\[\max\bigl\{
\frac{\sqrt{\alpha}}{2} \,W_2(\mu_{Nh}, \pi ), \dTV{\mu_{Nh}, \pi} \bigr\}\leq \sqrt{\frac{\DKL{\mu_{Nh} \river \pi}}{2}}\leq \varepsilon\,.\]
Also,~\Cref{alg:main} uses a total of $KN = \widetilde O(\kappa \log^2(d/\varepsilon^2))$ parallel rounds and $M$ $\delta$-approximate gradient evaluations in each round.
\end{corollary}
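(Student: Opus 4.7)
The statement has two parts: an initialization bound and a parameter-check invocation of Theorem \ref{thm:sample-main}. Both are routine; the corollary is essentially a ``plug in the constants'' exercise.

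For the initialization bound, I would expand
\[ \DKL{\mu_0 \river \pi} = -H(\mu_0) + \E_{\mu_0}[V] + \log Z, \qquad Z \deq \int \exp(-V). \]
Since $\mu_0 = \Normal{x^\star, \beta^{-1} I}$, the differential entropy $H(\mu_0) = \tfrac{d}{2}\log(2\pi e/\beta)$ is explicit. By $\beta$-smoothness and $\nabla V(x^\star) = 0$, we have $V(x) \leq V(x^\star) + \tfrac{\beta}{2}\,\norm{x-x^\star}^2$, hence $\E_{\mu_0}[V] \leq V(x^\star) + d/2$. By $\alpha$-strong convexity, $V(x) \geq V(x^\star) + \tfrac{\alpha}{2}\,\norm{x-x^\star}^2$, and comparing to a Gaussian integral yields $Z \leq \exp(-V(x^\star))\,(2\pi/\alpha)^{d/2}$. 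Combining the three estimates, the $V(x^\star)$ terms cancel, the $d/2$ cancels against the $e$ in the entropy term, and the $\log(2\pi)$ terms cancel, leaving $\DKL{\mu_0 \river \pi} \leq \tfrac{d}{2}\log(\beta/\alpha) = \tfrac{d}{2}\log\kappa$.

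For the second part, I would verify each of the five hypotheses of Theorem \ref{thm:sample-main} against the stated parameter choices. Taking $h = 1/(10\beta)$ gives $\beta h = 1/10$. The conditions $\delta \leq 2\sqrt\alpha\,\varepsilon$ and $M \geq 7\max\{\kappa d/\varepsilon^2,\kappa^2\}$ are saturated by the choices. For $K \geq 2 + \log M$, using $K = 3\log M$ needs $\log M \geq 1$, which holds since $M \geq 7$. Finally,
\[ Nh = 10\kappa\,\log\frac{d\log\kappa}{\varepsilon^2} \cdot \frac{1}{10\beta} = \alpha^{-1}\log\frac{d\log\kappa}{\varepsilon^2} \geq \alpha^{-1}\log\frac{2\DKL{\mu_0 \river \pi}}{\varepsilon^2} \]
by the initialization bound of the previous paragraph. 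Invoking Theorem \ref{thm:sample-main} then delivers the KL, TV, and $W_2$ guarantees simultaneously via Pinsker's and Talagrand's inequalities (already built into the theorem statement).

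For the complexity accounting, the total number of parallel rounds is $KN = 3\log M \cdot 10\kappa\log(d\log\kappa/\varepsilon^2)$; since $\log M = O(\log(\kappa d/\varepsilon))$, this is $\widetilde O(\kappa\log^2(d/\varepsilon^2))$ as claimed, and each round uses $M$ approximate gradient evaluations in parallel. I do not anticipate any real obstacle: the whole corollary is a bookkeeping exercise, with the only mildly nontrivial ingredient being the cancellation in the initialization bound, which is a well-known computation.
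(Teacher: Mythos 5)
Your proof is correct and takes the same route the paper intends: the initialization bound is exactly the standard Gaussian-to-strongly-log-concave KL computation (which the paper simply cites from Dwivedi et al.~\S3.2 rather than reproducing), and the second half is the plug-in verification of the five hypotheses of Theorem~\ref{thm:sample-main} together with the complexity bookkeeping. All five parameter checks, including $K = 3\log M \ge 2 + \log M$ via $M\ge 7 > e$ and $Nh = \alpha^{-1}\log(d\log\kappa/\varepsilon^2) \ge \alpha^{-1}\log(2\DKL{\mu_0\river\pi}/\varepsilon^2)$, go through exactly as you describe.
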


The proofs for this section are given in \S\ref{app:lmc}.

%\begin{proof}
%   Since $\nabla^2 V\succeq \alpha I$
%   \[V(x) \geq V(x^*) + \langle \nabla V(x^*), x-x^*\rangle + \frac{\alpha}{2} \norm{x-x^*}^2 = V(x^*)  + \frac{\alpha}{2} \norm{x-x^*}^2  \]
%   thus \[ Z = \int_x \exp(-V(x)) dx \leq \int_x \exp(-V(x^*) - \frac{\alpha}{2} \norm{x-x^*}^2) dx = \exp(-V(x^*) ) (2\pi/\alpha)^{d/2}  \]
%   and by \cref{lem:initialization bound general}
%   \[\DKL {\mu_0\river \pi} \leq \frac{d}{2} \log(2\pi/\alpha)  +\frac{d}{2} (1 - \log (2\pi e/\beta) ) = \frac{d}{2} \log \kappa .\]
%   The rest follows from \cref{thm:sample-main}.
%\end{proof}

\subsection{ULMC}\label{sec:ulmc}

In this section, we design a parallel sampler based on underdamped Langevin Monte Carlo (ULMC), also called \emph{kinetic Langevin}, which has similar parallel iteration complexity as LMC but requires less total work.
Since there are difficulties applying the interpolation method without higher-order smoothness assumptions (see the discussion in~\cite{Maetal21NesterovMCMC, Zhangetal23ULMC}), we will use a different proof technique based on Girsanov's theorem, as in~\cite{AltChe23warm, Zhangetal23ULMC}.
Note that since we seek TV guarantees, we cannot apply the coupling arguments of~\citet{Chengetal18ULMC, DalRio20Kinetic}.

\subsubsection{Algorithm}\label{sec:ulmc_alg}

In continuous time, the underdamped Langevin diffusion is the coupled system of SDEs
\begin{align*}
    dX_t
    &= P_t \, dt\,, \\
    dP_t
    &= -\nabla V(X_t)\, dt - \gamma P_t \, dt + \sqrt{2\gamma}\, dB_t\,,
\end{align*}
where $\gamma > 0$ is the friction parameter. Throughout, we will simply set $\gamma = \sqrt{8\beta}$, where $\beta$ is the smoothness parameter.

The idea for developing a parallel sampler is similar as before: we parallelize Picard iteration.
However, in order to eventually apply Girsanov's theorem to analyze the algorithm, the discretization must be chosen so that $dX_t = P_t \, dt$ is preserved. Hence, we will use the exponential Euler integrator.

We use the following notation: $\tau(t)$ is the largest multiple of $h/M$ which is less than $t$, i.e., $\tau(t) = \lfloor t/\frac{h}{M} \rfloor \, \frac{h}{M}$.
We define a sequence of processes $(X^{(0)}, P^{(0)})$, $(X^{(1)}, P^{(1)})$, etc., so that
\begin{align*}
    dX_t^{(k+1)} &= P_t^{(k+1)}\,dt \,, \\
    dP_t^{(k+1)} &= -\nabla V(X_{\tau(t)}^{(k)})\,dt - \gamma P_t^{(k+1)}\, dt + \sqrt{2\gamma}\, dB_t\,.
\end{align*}
This is a linear SDE\@, so it can be integrated exactly, yielding
\begin{align}
    X_{nh+(m+1)h/M}^{(k+1)}
    &= X_{nh+mh/M}^{(k+1)} + \frac{1-\exp(-\gamma h/M)}{\gamma} \, P_{nh+mh/M}^{(k+1)} \nonumber \\
    &\qquad{} - \frac{h/M - (1-\exp(-\gamma h/M))/\gamma}{\gamma}\,\nabla V(X_{nh+mh/M}^{(k)}) + \xi^X\,, \label{eq:ulmc_X} \\
    P_{nh+(m+1)h/M}^{(k+1)}
    &= \exp(-\gamma h/M)\, P_{nh+mh/M}^{(k+1)} - \frac{1-\exp(-\gamma h/M)}{\gamma}\,\nabla V(X_{nh+mh/M}^{(k)}) + \xi^P\,, \label{eq:ulmc_P}
\end{align}
where $(\xi^X, \xi^P)$ is a correlated Gaussian vector in $\R^d\times\R^d$ with law $\mc N(0,\Sigma)$, where
\begin{align}\label{eq:correlated_gaussian}
    \Sigma = \begin{bmatrix}
        \frac{2}{\gamma} \,[\frac{h}{M} - \frac{2}{\gamma} \,(1-\exp(-\gamma h/M)) + \frac{1}{2\gamma} \,(1-\exp(-2\gamma h/M))] & * \\[0.5em]
        \frac{1}{\gamma} \,(1-2\exp(-\gamma h/M) +\exp(-2\gamma h/M)) & 1-\exp(-2\gamma h/M)
    \end{bmatrix}\,,
\end{align}
and the upper-left entry marked $*$ is determined by symmetry.

Note that each processor $m=1,\dotsc,M$ can independently generate a correlated Gaussian vector according to the above law and store it. Then, the updates for the above discretization can be computed quickly in parallel.
We summarize the algorithm below as \Cref{alg:ulmc}.

\begin{algorithm}[t]
\textbf{Input}: $(X_0,P_0)\sim \mu_0$, approximate score function $s: \R^d \to \R^d$ ($s\approx \nabla V$) \;

\For{$n=0, \dots, N-1$}{
    \For{$m=0, \dots, M$ in parallel}{
        $(X_{nh+mh/M}^{(0)}, P_{nh+mh/M}^{(0)}) \gets (X_{nh}, P_{nh})$\;
        
        Sample correlated Gaussian vectors according to~\cref{eq:correlated_gaussian}
    }
    \For{$k=0, \dots, K-1$}{
        \For{$m=0, \dots, M$ in parallel}{
        Compute $(X_{nh+mh/M}^{(k+1)}, P_{nh+mh/M}^{(k+1)})$ using~\cref{eq:ulmc_X} and~\cref{eq:ulmc_P}, replacing $\nabla V$ with $s$
        }    
    }
    $(X_{(n+1)h}, P_{(n+1)h}) \gets (X_{nh+h}^{(K)}, P_{nh+h}^{(K)})$
    \caption{Parallelized underdamped Langevin dynamics \label{alg:ulmc}}	
}
\end{algorithm}

\subsubsection{Analysis}\label{sec:ulmc_analysis}

We now give our guarantees for~\Cref{alg:ulmc}. Compared to~\Cref{thm:sample-main}, it improves the number of processors by roughly a factor of $\sqrt{\kappa d}/\varepsilon$.
Although it is stated for strongly log-concave measures for simplicity, similarly to \S\ref{sec:lmc}, the discretization guarantees only require $\pi$ to satisfy a log-Sobolev inequality and smoothness; see~\Cref{thm:ulmc_discretization} for a more precise statement. The proof is given in \S\ref{app:ulmc}.

\begin{theorem}\label{thm:ulmc_final}
    Assume that $V$ is $\alpha$-strongly convex and $\beta$-smooth; let $\kappa \deq \beta/\alpha$.
    Assume that $V$ is minimized at $x^\star$.
    Consider~\Cref{alg:ulmc} initialized at $\mu_0 = \mc N(x^\star, \beta^{-1} I) \otimes \mc N(0, I)$ and with
    \begin{align*}
        h = \Theta\bigl(1/\sqrt\beta\bigr)\,,\;\; \delta \le \widetilde O\bigl(\frac{\sqrt\alpha\varepsilon}{\sqrt{\log d}}\bigr)\,,\;\; M = \widetilde \Theta\bigl(\frac{\sqrt{\kappa d}}{\varepsilon}\bigr)\,,\;\; K = \Theta\bigl(\log \frac{\kappa d}{\varepsilon^2}\bigr)\,,\;\; N = \widetilde\Theta\bigl(\kappa \log \frac{d}{\varepsilon^2}\bigr)\,.
    \end{align*}
    Then, the law of the output of~\Cref{alg:ulmc} is $\varepsilon$-close in total variation distance to $\pi$.
    The algorithm uses a total of $KN = \widetilde\Theta(\kappa \log^2(d/\varepsilon^2))$ parallel rounds and $M$ $\delta$-approximate gradient evaluations in each round.
\end{theorem}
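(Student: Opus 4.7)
The plan is to follow the roadmap announced in Section \ref{sec:ulmc_analysis}: bound the total variation distance between the output law and $\pi\otimes \mc N(0,I)$ by applying Girsanov's theorem, then project down to the $X$-marginal. I would organize the argument around three independently controlled sources of error: (i) Picard iteration error (from $K$ refinements being finite), (ii) one-step discretization error (from the drift being frozen at $\tau(t)$ rather than $t$), and (iii) continuous-time convergence error (from running the idealized diffusion only for time $Nh$).

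First I would establish the ULMC analog of \cref{ineq:refine gradient bound}: along a single outer block $[nh,(n+1)h]$, the $K$-fold Picard iterate satisfies
\begin{align*}
    \E\bigl[\bigl\lVert \nabla V(X_t^{(K)}) - \nabla V(X_t^{(K-1)})\bigr\rVert^2\bigr]
    \lesssim \exp(-cK)\,\bigl(d + \text{initial block error}\bigr)\,,
\end{align*}
using smoothness and the explicit exponential-Euler formulas \cref{eq:ulmc_X,eq:ulmc_P}. The contraction constant here comes from the fact that the Picard operator maps positions through two integrations against $\nabla V$, so applying $\beta$-smoothness introduces a factor on the order of $\beta h^2$, which is bounded by a constant less than $1$ given $h = \Theta(1/\sqrt{\beta})$. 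This is the same mechanism as in the LMC analysis, but one has to be careful that the position updates couple to the momentum updates; I expect a $2\times 2$ linear recursion in the position/momentum Picard errors and would pick $h$ small enough to make its operator norm bounded by a fixed $c<1$.

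Next I would apply Girsanov's theorem to the coupled law of $(X_t^{(K)}, P_t^{(K)})_{t\in[0,Nh]}$ versus the law of the continuous underdamped Langevin diffusion $(X_t^\ast,P_t^\ast)_{t\in[0,Nh]}$ driven by the same Brownian motion $B$ and starting from $\mu_0$. Since the diffusion coefficient $\sqrt{2\gamma}$ appears only in the $P$-component and is constant, the Radon--Nikodym derivative has the clean form
\begin{align*}
    \DKL{\law((X^{(K)},P^{(K)})_{[0,Nh]}) \river \law((X^\ast,P^\ast)_{[0,Nh]})}
    \le \frac{1}{4\gamma}\int_0^{Nh} \E\bigl[\bigl\lVert s(X_{\tau(t)}^{(K-1)})-\nabla V(X_t^\ast)\bigr\rVert^2\bigr]\,dt\,.
\end{align*}
I would split the integrand into the three pieces $s(X_{\tau(t)}^{(K-1)})-\nabla V(X_{\tau(t)}^{(K-1)})$ (bounded by $\delta^2$), $\nabla V(X_{\tau(t)}^{(K-1)})-\nabla V(X_{\tau(t)}^{(K)})$ (handled by the Picard bound above with $K$ chosen $\Theta(\log(\kappa d/\varepsilon^2))$), and $\nabla V(X_{\tau(t)}^{(K)})-\nabla V(X_t^\ast)$, the genuine discretization term. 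The last term is controlled by a standard one-step analysis of exponential-Euler ULMC: using $\beta$-smoothness and moment bounds on $P_t$ under strong convexity, it is of order $\beta^2 (h/M)^2 \,\E[\lVert P\rVert^2] \lesssim \beta^2 (h/M)^2 d$, so that the integral contributes $\lesssim (\beta^2/\gamma)\,(h/M)^2\, d\, Nh$. With the stated parameters this is $\widetilde O(\varepsilon^2)$.

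Finally, I would combine this Girsanov KL bound with the contraction of the continuous underdamped diffusion toward $\pi\otimes\mc N(0,I)$. Under $\alpha$-strong convexity (hence LSI), standard hypocoercivity estimates (e.g.\ the framework of Villani or the Cao--Lu--Lu entropy method) give $\DKL{\law(X_{Nh}^\ast,P_{Nh}^\ast) \river \pi\otimes\mc N(0,I)} \le \varepsilon^2$ after $Nh = \widetilde\Theta(\kappa)$, given the initialization bound $\DKL{\mu_0\river \pi\otimes \mc N(0,I)} \le \tfrac{d}{2}\log\kappa + O(d)$ inherited from \Cref{cor:lmc_guarantee}. The data-processing inequality, the triangle inequality for TV, and Pinsker then yield the $\varepsilon$ total-variation guarantee. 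The main obstacle I anticipate is tracking the Picard-refinement error all the way through the Girsanov integral: because the drift at time $t$ inside the Girsanov exponent depends on $X_{\tau(t)}^{(K-1)}$, not $X_{\tau(t)}^{(K)}$, one has to propagate the refinement recursion \emph{inside} an exponential martingale bound and verify that Novikov's condition holds with constants uniform in $K$ and $M$; this is where choosing $h=\Theta(1/\sqrt\beta)$ (rather than the larger step allowed for LMC) is essential.
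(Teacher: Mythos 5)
Your roadmap matches the paper's: Picard-iteration error bounded by a geometric recursion, Girsanov's theorem to compare the interpolated algorithm process to the continuous underdamped diffusion on path space, and then entropic hypocoercivity plus Pinsker and the triangle inequality to close. The scalar Picard contraction at rate $\beta h^2$ and the parameter accounting both check out. The Novikov point you flag is a real technicality that the paper also elides, but it does not change the architecture.

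There is, however, one orientation choice that has substantive downstream consequences, and your proposal has it backwards. You write the path-space KL as $\DKL{\law(\text{algorithm}) \river \law(\text{continuous})}$, which means the expectation in the Girsanov integral is taken under the measure $\mb P$ where $(X_t,P_t)$ is the \emph{algorithm} process. The paper instead bounds $\DKL{\pi_T \river \mu_T} \le \DKL{\mb Q \river \mb P}$, where $\mb Q$ is the measure under which $(X_t,P_t)$ is the continuous underdamped Langevin diffusion. This matters because the one-step discretization term $\Ex[\norm{X_t - X_{\tau(t)}}^2] \le (h/M)^2 \sup_t \Ex[\norm{P_t}^2]$ and the $\mc E_1$ base case both require moment bounds $\sup_t \Ex[\norm{P_t}^2]$ and $\sup_t \Ex[\norm{\nabla V(X_t)}^2]$; the paper obtains these from \Cref{lem:ulmc_moment_bds}, which uses the data-processing inequality and the fact that the \emph{continuous} diffusion is monotone in KL toward $\pi \otimes \mc N(0,I)$. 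Under $\mb P$ the process is the algorithm, which is neither stationary nor provably KL-contractive for the target, so you would need a separate argument (e.g., a Gr\"onwall bootstrap along the algorithm trajectory) to get those moments — a genuine additional step your sketch does not supply. Either direction is fine for the final TV conclusion via Pinsker, but you should flip the KL to align with the moment-bound infrastructure. Two minor side notes: the paper's refinement is a scalar Gr\"onwall recursion in $\mc E_k$ rather than the $2\times 2$ linear recursion you anticipate, which simplifies the bookkeeping; and when you use the hypocoercivity result you will need to control the Lyapunov functional $\mc F(\mu_0\,\|\,\pi)$, not just $\DKL{\mu_0\river\pi}$, so the initialization bound must also control the weighted Fisher-information term, which the paper verifies by a short direct calculation.
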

	\section{Implications for sampling from discrete distributions}

In this section, we prove~\Cref{thm:rnc discrete sampling}.
For simplicity, we only state our parallel guarantees using parallel LMC, for which the initialization is more straightforward, but it is easy to combine the results of this section with parallel ULMC as well.
For concreteness, we restate \cite{AHLVXY22}'s sampling-to-counting reduction.
Then,~\Cref{thm:rnc discrete sampling} is a consequence of~\citet[Lemma 7]{AHLVXY22}, our fast parallel sampler with TV guarantee, and a modified version of~\citet[Proposition 27]{AHLVXY22}. We include the proofs for completeness in \S\ref{app:discrete}. %We first state some auxiliary lemmas.

\begin{algorithm}[t]
%\begin{algorithmic}
    Initialize $w_0\gets 0$ \\
	\For{$i = 0,\dotsc,T-1$}{
	$x_{i+1}\gets$ (approximate) sample from $\tau_{w_i}\mu*\Normal{0, cI}$ \\
	$w_{i+1}\gets w_i+x_{i+1}/c$
 }
\textbf{return} $\sign{w_T}\in \set{\pm 1}^n$
 %\end{algorithmic}
\caption{Framework for discrete sampling via continuous sampling \label{alg:sampling to counting reduction}}	
\end{algorithm}

We give the overall algorithm as~\Cref{alg:sampling to counting reduction}.
The following lemma shows that the step of sampling from distributions of the form $\tau_w \mu * \Normal{0, cI}$ is a well-conditioned log-concave sampling problem, and moreover, that the score can be approximated quickly in parallel.

\begin{lemma}[\cite{AHLVXY22}]\label{lem:computation}
    Let $\nu = \tau_w \mu \ast \Normal{0,cI}$. Then, $\nu \propto \exp(-V)$ with \[ -\nabla V(y) = \frac{\mean{\tau_{y/c+w} \mu} }{c} - \frac{y}{c} = \frac{1}{c}\, \frac{\sum_{x\in \{\pm\}^n} x \exp(\dotprod{y/c+w,x})\, \mu(x) }{\sum_{x\in\{\pm\}^n} \exp(\dotprod{y/c+w,x})\, \mu(x) }  -\frac{y}{c} \] 
    and
    \[\nabla^2 V(y) = -\frac{\cov{\tau_{y/c + w}\mu}  }{c^2} + \frac{I}{c}\,.\]
    If $\cov{\tau_y\mu} \preceq \frac{c}{2} I$ for all $y\in\R^n$, then $\nu$ is well-conditioned strongly log-concave with condition number $\kappa = O(1)$, i.e., for all $y \in\R^n$:
    \[ \frac{1}{2c}\,I \preceq \nabla^2 V(y) \preceq \frac{1}{c}\,I\,. \]
   % where $(\exp (\L_{\mu}{ w^{+,(j)}} - \L_{\mu}{w} ) - \exp (\L_{\mu}{ w^{-,(j)}} - \L_{\mu}{w}) )_{j\in [n]} .$

   Furthermore, a $\delta$-approximate score function $s$ for $\nabla V$ can be computed in $O(1)$ parallel iterations using $ n$ machines, each making $O(1)$ calls to an $\varepsilon = O(\delta\sqrt{c/n})$-approximate oracle for the Laplace transform of $\mu.$ 
    
\end{lemma}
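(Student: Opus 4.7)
My plan is to establish the formulas for $\nabla V$ and $\nabla^2 V$, the Hessian bounds under the covariance hypothesis, and the parallel construction of the approximate score, in sequence.

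First I would derive the density of $\nu = \tau_w \mu \ast \Normal{0,cI}$. Writing the convolution
\[ \nu(y) \propto \sum_{x\in\{\pm 1\}^n} \mu(x)\, \exp(\dotprod{w,x})\, \exp(-\|y-x\|^2/(2c)), \]
expanding the square, and using that $\|x\|^2 = n$ is constant on the hypercube, the density simplifies to $\nu(y) \propto \exp(-\|y\|^2/(2c))\sum_x \mu(x)\exp(\dotprod{y/c+w,\, x})$, so
\[ V(y) = \frac{\|y\|^2}{2c} - \L_\mu{y/c+w} + \text{const}. \]
Combining this with the standard cumulant identities $\nabla \L_\mu{z} = \mean{\tau_z\mu}$ and $\nabla^2 \L_\mu{z} = \cov{\tau_z\mu}$ and applying the chain rule at $z = y/c + w$ yields the stated formulas for $\nabla V(y)$ and $\nabla^2 V(y)$.

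The Hessian bounds are then immediate: since every covariance matrix is positive semidefinite, the upper bound $\nabla^2 V(y) \preceq I/c$ holds unconditionally, while under the hypothesis $\cov{\tau_z\mu} \preceq (c/2)\,I$,
\[ \nabla^2 V(y) = \frac{I}{c} - \frac{\cov{\tau_z\mu}}{c^2} \succeq \frac{I}{c} - \frac{I}{2c} = \frac{I}{2c}. \]

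For the parallel score, my plan is to compute $\mean{\tau_z\mu}$ coordinatewise in parallel. Using the identity $\mean{\tau_z\mu}_i = 2\Pr_{\tau_z\mu}[x_i = 1] - 1$ together with the extended Laplace-transform convention from Section~1.2,
\[ \Pr_{\tau_z\mu}[x_i = 1] = \exp\bigl(z_i + \L_\mu{z_i\!\mapsto\!+\infty,\, z_{-i}} - \L_\mu{z}\bigr), \]
so machine $i$ uses $O(1)$ oracle calls to estimate this ratio and hence the $i$th coordinate of the mean, while the $n$ machines execute independently in parallel. Assembling the estimates gives $s(y) = \hat m/c - y/c \approx -\nabla V(y)$ in $O(1)$ parallel rounds.

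The step requiring most care — and the main place I expect to spend effort — is the error accounting for this last piece. An $\varepsilon$-additive error in $\L_\mu$ propagates to an $\exp(O(\varepsilon))$-multiplicative error on each of the partition functions entering the ratio, hence to $O(\varepsilon)$-additive error on each $\Pr_{\tau_z\mu}[x_i = 1]$ and thus on $\mean{\tau_z\mu}_i$. To enforce $\|s(y) - \nabla V(y)\| \le \delta$, converting $\ell_2$ to $\ell_\infty$ and accounting for the factor $1/c$ in front of the mean demands per-coordinate tolerance of order $c\delta/\sqrt n$; chasing these constants back through the chain yields the stated requirement $\varepsilon = O(\delta\sqrt{c/n})$ on the log-Laplace oracle.
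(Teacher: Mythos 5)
Your proposal is correct and follows essentially the same route as the paper: the paper cites the formulas for $\nabla V$, $\nabla^2 V$ and the Hessian bounds to \cite{AHLVXY22} and only proves the last claim, using exactly your coordinatewise scheme, i.e., writing $(\mean{\tau_z\mu})_j = 2\exp(z_j + \L_\mu{z^+} - \L_\mu{z}) - 1$, querying the oracle twice per coordinate in parallel, and exploiting that $\exp(z_j + \L_\mu{z^+} - \L_\mu{z}) \le 1$ to convert the $\exp(O(\varepsilon))$ multiplicative error into $O(\varepsilon)$ additive error per coordinate. One small point: your own per-coordinate tolerance $c\delta/\sqrt n$ would give $\varepsilon = O(\delta c/\sqrt n)$ rather than the stated $O(\delta\sqrt{c/n})$ (they differ by $\sqrt c$), but the paper's own chain $\norm{s-\nabla V}\lesssim \varepsilon\sqrt{n/c}$ has the same apparent $\sqrt c$ slip, and since $c = O(1)$ in all applications this is immaterial.
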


The next lemma states that if the samples from the continuous densities $\tau_w \mu * \mc N(0, cI)$ are accurate, then the output of~\Cref{alg:sampling to counting reduction} outputs an approximate sample from $\mu$.

\begin{lemma}[{\citet[Lemma 7]{AHLVXY22}}]\label{lem:tv-close}
    If the continuous samples are exact in~\Cref{alg:sampling to counting reduction}, then for $T=\Omega(c\log(n/\varepsilon))$, the distribution of $ c w_T/T$ is $\mu\ast \Normal{0,\frac{c}{T} I}$ and output of the algorithm is $\varepsilon$-close in total variation distance to $\mu$. 
\end{lemma}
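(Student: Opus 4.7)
The plan is to recognize \Cref{alg:sampling to counting reduction} as an exact simulation of a Gaussian-channel observation process: imagine drawing $Y \sim \mu$ once at the outset and then emitting noisy observations $x_i = Y + Z_i$, where $Z_1, \dotsc, Z_T \sim \Normal{0, cI}$ are i.i.d.\ and independent of $Y$. To identify the algorithm with this process, I would compute the Bayesian posterior of $Y$ given $(x_1, \dotsc, x_t)$. Crucially, because $\mu$ is supported on $\{\pm 1\}^n$, the quantity $\norm{y}^2 = n$ is constant on the support, so the Gaussian likelihood collapses to a linear exponential tilt:
\[ p(y \mid x_1, \dotsc, x_t) \propto \mu(y)\exp\Bigl(-\frac{1}{2c}\sum_{i\le t} \norm{x_i - y}^2\Bigr) \propto \mu(y)\exp\bigl(\langle w_t, y\rangle\bigr) = \tau_{w_t}\mu(y)\,, \]
where $w_t \deq \sum_{i\le t} x_i/c$. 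Hence the predictive law of the next observation in the auxiliary process is $\tau_{w_t}\mu \ast \Normal{0, cI}$, which is exactly the conditional law used by the algorithm. An induction on $t$ then identifies the joint distribution of $(x_1, \dotsc, x_T, w_T)$ produced by \Cref{alg:sampling to counting reduction} with that of the observation process.

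With this coupling in hand both conclusions are short. In the observation process, $cw_T/T = T^{-1}\sum_i x_i = Y + T^{-1}\sum_i Z_i$, and the Gaussian average $T^{-1}\sum_i Z_i \sim \Normal{0, (c/T)\,I}$ is independent of $Y \sim \mu$, giving the first claim $cw_T/T \sim \mu \ast \Normal{0,(c/T)\,I}$. For the sign-recovery claim, coordinatewise I write $(w_T)_j = (T Y_j + \sum_i Z_{i,j})/c$ with $\sum_i Z_{i,j} \sim \Normal{0, Tc}$; since $Y_j \in \{\pm 1\}$, the sign of $(w_T)_j$ equals $Y_j$ whenever $|\sum_i Z_{i,j}| < T$. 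A standard Gaussian tail bound gives $\Pr[|\sum_i Z_{i,j}| \ge T] \le 2 \exp(-T/(2c))$, so a union bound over the $n$ coordinates yields $\Pr[\sign{w_T} \ne Y] \le 2n \exp(-T/(2c))$, which is at most $\varepsilon$ as soon as $T = \Omega(c\log(n/\varepsilon))$. Since $Y \sim \mu$ in the coupling, the elementary coupling inequality $\dTV{\law(\sign{w_T}), \mu} \le \Pr[\sign{w_T} \ne Y]$ closes the argument.

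The only genuinely conceptual step is the first one: recognizing that the stochastic iteration defining $w_t$ is exactly the Bayesian update of a Gaussian-observation channel with hidden input $Y \sim \mu$. Everything after that reduces to one-dimensional Gaussian tail estimates. The one analytic subtlety is the reliance on $\norm{y}^2$ being constant on $\{\pm 1\}^n$; without this the posterior would retain a quadratic $\exp(-t\,\norm{y}^2/(2c))$ factor and would not equal the exponential tilt $\tau_{w_t}\mu$, so the identification is genuinely tied to the hypercube support that \Cref{alg:sampling to counting reduction} is designed for.
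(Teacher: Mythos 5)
Your proof is correct, and it captures exactly the standard argument underlying this lemma. Algorithm~\ref{alg:sampling to counting reduction} is the discrete-time stochastic localization process, and identifying it with the Bayesian posterior update for a hidden $Y\sim\mu$ observed through additive $\Normal{0,cI}$ noise is precisely the right move. The key algebraic observation — that $\norm{y}^2 = n$ is constant on $\{\pm 1\}^n$, so the Gaussian likelihood reduces to a linear tilt $\tau_{w_t}\mu$ — is what makes the coupling exact, and you correctly flag it as the load-bearing step. The predictive-law match $\tau_{w_t}\mu\ast\Normal{0,cI}$ then transfers the joint law of $(x_1,\dotsc,x_T)$ from the auxiliary process to the algorithm, the averaging identity $cw_T/T = Y + T^{-1}\sum_i Z_i$ gives the first claim, and the coordinatewise Gaussian tail bound plus union bound plus the coupling characterization of TV gives the second. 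The paper cites this result from \cite{AHLVXY22} rather than reproving it, but your argument is the one they use (it is the defining property of stochastic localization viewed as Bayesian denoising); there is no meaningful difference in route.
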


These results, together with an initialization bound (see~\Cref{lem:initialization bound for discrete convolve with Gaussian}), then yield the proof of~\Cref{thm:rnc discrete sampling}.
Details are given in \S\ref{app:discrete}.

% Finally, we need to choose a starting distribution $\mu_0$ with bounded $\DKL{\mu_0 \river \pi}.$ 
% \begin{lemma}
    
% \end{lemma}

    \section*{Acknowledgements}

    SC acknowledges the support of the Eric and Wendy Schmidt Fund at the Institute for Advanced Study.

    \newpage
    \appendix
    \section{Proofs for LMC}\label{app:lmc}

In this section, we give the proofs for \S\ref{sec:lmc}.
Let $\mu_{nh} \deq \law(X_{nh})$.
We first need the following recursive bound, which shows that the error decays exponentially fast in the parallel refinement.

\begin{lemma}\label{lem:parallel refinement}
    Suppose that $V$ is $\beta$-smooth, and that the score function $s$ is $\delta$-accurate.
    Assume that $\beta h\le 1/10$ and that $\pi$ satisfies Talagrand's $T_2$ inequality with constant $\alpha$.
    Then,
    \begin{align*}
        &\max_{m=1,\dotsc,M} \E[\norm{X_{nh+mh/M}^{(K)} - X_{nh+mh/M}^{(K-1)}}^2] \\
        &\qquad \le 34\exp(-3.5K)\,\Bigl(1.4dh + \frac{8\beta^2 h^2}{\alpha}\DKL{\mu_{nh}\river \pi}\Bigr) + 8.2\delta^2 h^2\,.
    \end{align*}
\end{lemma}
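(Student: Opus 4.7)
The plan is to set up and iterate a contractive recursion for $E_K \deq \max_{m=1,\dotsc,M}\E[\norm{X^{(K)}_{nh+mh/M} - X^{(K-1)}_{nh+mh/M}}^2]$. For $K\ge 2$, subtracting the Picard updates at rounds $K$ and $K-1$ cancels both the common initial point $X_{nh}$ and the common Brownian contribution, leaving
\[
X^{(K)}_{nh+mh/M} - X^{(K-1)}_{nh+mh/M} = -\frac{h}{M}\sum_{m'=0}^{m-1}\bigl[s(X^{(K-1)}_{nh+m'h/M}) - s(X^{(K-2)}_{nh+m'h/M})\bigr].
\]
Applying Cauchy--Schwarz (gaining the factor $m\le M$) together with the near-Lipschitz bound $\norm{s(x)-s(y)}\le 2\delta + \beta\,\norm{x-y}$, which follows from $\delta$-accuracy of $s$ and $\beta$-smoothness of $V$, and then squaring via $(a+b)^2\le 2a^2+2b^2$ before taking expectations, I obtain the one-step recursion $E_K \le 2\beta^2 h^2\,E_{K-1} + 8\delta^2 h^2$. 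Under $\beta h\le 1/10$, the contraction factor satisfies $2\beta^2 h^2 \le 1/50 < \exp(-3.5)$; this is the source of the $\exp(-3.5K)$ decay in the statement.

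The bulk of the work is the base case $E_1$, since the recursion only begins at $K=2$. Here $X^{(0)}_{nh+mh/M} = X_{nh}$ for every $m$, so
\[
X^{(1)}_{nh+mh/M} - X_{nh} = -\tfrac{mh}{M}\, s(X_{nh}) + \sqrt{2}\,(B_{nh+mh/M} - B_{nh}).
\]
Conditioning on the filtration at time $nh$ makes the Brownian increment a centered Gaussian of covariance $(mh/M)\,I$ independent of $X_{nh}$, so the cross term vanishes in expectation and maximizing over $m$ gives $E_1 \le h^2\,\E\norm{s(X_{nh})}^2 + 2hd$. To bound $\E\norm{s(X_{nh})}^2 \le 2\E\norm{\nabla V(X_{nh})}^2 + 2\delta^2$, I plan to couple $X_{nh}\sim \mu_{nh}$ optimally with $X^\star\sim\pi$, invoke Talagrand's $T_2$ inequality to get $\E\norm{X_{nh}-X^\star}^2 = W_2^2(\mu_{nh},\pi) \le \tfrac{2}{\alpha}\DKL{\mu_{nh}\river\pi}$, and then use $\beta$-Lipschitzness of $\nabla V$ to conclude
\[
\E\norm{\nabla V(X_{nh})}^2 \le \tfrac{4\beta^2}{\alpha}\DKL{\mu_{nh}\river\pi} + 2\E_\pi\norm{\nabla V}^2.
\]
The remaining piece $\E_\pi\norm{\nabla V}^2$ is controlled by integration by parts using $\nabla\pi = -\pi\nabla V$: one has $\int\pi\,\norm{\nabla V}^2 = \int\pi\,\Delta V \le \beta d$ because $\nabla^2 V \preceq \beta I$. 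Combining these estimates and using $\beta h\le 1/10$ to absorb the $O(\beta h)\cdot hd$ drift contribution into the $2hd$ Brownian term produces a base-case bound of the form $E_1 \lesssim hd + \tfrac{\beta^2 h^2}{\alpha}\DKL{\mu_{nh}\river\pi} + \delta^2 h^2$.

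With the base case in hand, unrolling the recursion produces
\[
E_K \le (2\beta^2 h^2)^{K-1}\,E_1 + \frac{8\delta^2 h^2}{1 - 2\beta^2 h^2},
\]
and the estimates $(2\beta^2 h^2)^{K-1}\le e^{3.5}\exp(-3.5K)$ (since $2\beta^2 h^2 \le e^{-3.5}$) and $8/(1-1/50)\le 8.2$ translate this into the stated inequality once the base-case expression is substituted. The argument is otherwise elementary; the main obstacle I anticipate is tracking the numerical constants carefully so that the $\exp(-3.5K)$ factor with prefactor $34$ absorbs both the Brownian piece $2hd$ and the small $O((\beta h)\cdot hd)$ correction simultaneously, while the $\delta^2 h^2$ remainder collects cleanly into the $8.2\delta^2 h^2$ term.
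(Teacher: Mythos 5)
Your proof is correct and follows essentially the same strategy as the paper: establish the geometric recursion $E_k \lesssim \beta^2 h^2\, E_{k-1} + \delta^2 h^2$ (which under $\beta h \le 1/10$ gives the $\exp(-3.5K)$ decay), bound the base case $E_1 \lesssim dh + h^2\,\E\norm{\nabla V(X_{nh})}^2 + \delta^2 h^2$ using independence of the Brownian increment together with the moment bound $\E\norm{\nabla V(X_{nh})}^2 \le 2\beta d + \tfrac{4\beta^2}{\alpha}\DKL{\mu_{nh}\river\pi}$ (the paper cites this from Vempala--Wibisono, Lemma~10; you rederive it from Talagrand's $T_2$ inequality and integration by parts), and unroll. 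The only substantive difference is numerical: you correctly carry the $\sqrt{2}$ factor in front of the Brownian increment, so your base-case Brownian contribution is $2dh$, whereas the paper's displayed computation gives $dh$, so your final constant on the $dh$ term would come out nearer $2.4$ than the stated $1.4$ --- a small slip in the paper's arithmetic, not a gap in your argument.
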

\begin{proof}
    Let
    \begin{align*}
        \mc E_k \deq \max_{m=1,\dotsc,M} \E[\norm{X_{nh+mh/M}^{(k)} - X_{nh+mh/M}^{(k-1)}}^2]\,.
    \end{align*}
    For any $m = 1,\dotsc,M$,
    \begin{align*}
        &\E[\norm{X_{nh+mh/M}^{(k+1)} - X_{nh+mh/M}^{(k)}}^2]
        = \Ex\Bigl[\Bigl\Vert \frac{h}{M} \sum_{m'=1}^{m-1} \bigl(s(X_{nh+m'h/M}^{(k)}) - s(X_{nh+m'h/M}^{(k-1)})\bigr) \Bigr\rVert^2\Bigr] \\
        &\qquad \le \frac{h^2 m}{M^2} \sum_{m'=1}^{m-1} \E[\norm{s(X_{nh+m'h/M}^{(k)}) - s(X_{nh+m'h/M}^{(k-1)})}^2] \\
        &\qquad \le 3h^2 \max_{m'=1,\dotsc,m} \E[\norm{\nabla V(X_{nh+m'h/M}^{(k)}) - \nabla V(X_{nh+m'h/M}^{(k-1)})}^2] +6\delta^2 h^2 \\
        &\qquad \le 3\beta^2 h^2\,\mc E_k + 6\delta^2 h^2
    \end{align*}
    and hence $\mc E_{k+1} \le 3\beta^2 h^2\, \mc E_k + 6\delta^2 h^2$.
    Also,
    \begin{align*}
        \E[\norm{X_{nh+mh/M}^{(1)} - X_{nh}}^2]
        &= \frac{h^2 m^2}{M^2} \E[\norm{s(X_{nh})}^2] + \frac{dhm}{M} \\
        &\le 2\delta^2 h^2 +2h^2 \E[\norm{\nabla V(X_{nh})}^2] + dh
    \end{align*}
    and thus $\mc E_1$ is bounded by the right-hand side above.
    Iterating the recursion and using $\beta h\le 10$,
    \begin{align*}
        \mc E_K
        &\le \exp(-3.5\,(K-1))\,\mc E_1 + 6.2\delta^2 h^2 \\
        &\le \exp(-3.5\,(K-1))\,\{2\delta^2 h^2 +2h^2 \E[\norm{\nabla V(X_{nh})}^2] + dh\} + 6.2\delta^2 h^2\,.
    \end{align*}
    Also, by~\citet[Lemma 10]{VW19},
    \begin{align*}
        \E[\norm{\nabla V(X_{nh})}^2]
        &\le 2\beta d + \frac{4\beta^2}{\alpha}\DKL{\mu_{nh} \river \pi}\,.
    \end{align*}
    Substituting this in and using $\beta h\le 1/10$ yields the result.
\end{proof}

\begin{proof}[Proof of~\Cref{thm:sample-main}]
We will use the interpolation method.
Let $X_{nh+mh/M} = X_{nh+mh/M}^{(K)}.$ It is easy to see that
\[X_{nh + (m+1)h/M} = X_{nh+mh/M} - \frac{h}{M}\, s (X_{nh+mh/M}^{(K-1)}) +\sqrt{2}\, (B_{nh + (m+1)h/M} -B_{nh+mh/M})\,. \]
Let $X$ denote the interpolation of $X^{(K)}$, i.e., for $ t\in [nh+mh/M, nh +(m+1)h/M],$  let
\begin{align*}
    X_t  = X_{nh+mh/M} - (t-nh-mh/M)\, s (X_{nh+mh/M}^{(K-1)}) +\sqrt{2}\, (B_t -B_{nh+mh/M})\,.
\end{align*}
%Couple $\bar{X}_t$ and $X_t$ together through shared Brownian motion.
Note that $s(X_{nh+mh/M}^{(K-1)})$ is a constant vector field given $ X_{nh+mh/M}^{(K-1)}.$ Let $\mu_t$ be the law of $ X_t.$
The same argument as in~\citet[Proof of Lemma 3]{VW19} yields the differential inequality
\begin{equation} \label{ineq:time derivative of KL}
    \begin{split}
        \partial_t \DKL{\mu_t \river\pi} &= -\FI {\mu_t \river \pi} + \Ex\Bigl\langle \nabla V(X_t) - s(X_{nh+mh/M}^{(K-1)}), \nabla \log\frac{ \mu_t(X_t) }{\pi(X_t)}\Bigr\rangle\\
    &\leq -\frac{3}{4} \FI {\mu_t \river \pi} +  \E[\|\nabla V(X_t) - s(X_{nh+mh/M}^{(K-1)}) \|^2]
    %&\leq_{(2)}  -\frac{3 \alpha }{2 } \DKL{\mu_t \river\pi} + \E_{\mathcal{F}_t}{||\nabla V(X_t) - s(X_{nh+mh/M}^{(K-1)}) ||^2 }\\
    %&\leq_{(4)} -\frac{3 \alpha }{2 } \DKL{\mu_t \river\pi} + 2 \delta^2 + 2 \E_{\mathcal{F}_t}{||\nabla V(X_t) - \nabla V (X_{nh+mh/M}^{(K-1)}) ||^2 }
    \end{split}
\end{equation}
where we used $\langle a, b \rangle \leq \norm{a}^2 + \frac{1}{4}\, \|b\|^2$ and $\E[\norm{ \nabla \log\frac{ \mu_t(X_t) }{\pi(X_t)}}^2] = \FI{\mu_t \river \pi}$.
Next, we bound
\begin{equation} \label{ineq:discretization bound}
\begin{split}
    &\E[\|\nabla V(X_t) - s(X_{nh+ mh/M}^{(K-1)}) \|^2] \\
    &\qquad \leq 2 \E[\|\nabla V(X_t) - \nabla V(X_{nh+ mh/M}^{(K-1)}) \|^2  + \norm{\nabla V(X_{nh+ mh/M}^{(K-1)})   - s(X_{nh+ mh/M}^{(K-1)})}^2]   \\
    &\qquad \leq 2\E[\|\nabla V(X_t) - \nabla V(X_{nh+ mh/M}^{(K-1)}) \|^2] +2 \delta^2\\
    &\qquad \leq 2\beta^2 \E[\|X_t - X_{nh+ mh/M}^{(K-1)} \|^2] + 2\delta^2\,.
\end{split}    
\end{equation}
Moreover,
\begin{align}\label{eq:disc_1}
    \E[\norm{X_t - X_{nh+mh/M}^{(K-1)}}^2]
    &\le 2\E[\norm{X_t - X_{nh+mh/M}}^2] + 2\E[\norm{X_{nh+mh/M}^{(K)} - X_{nh+mh/M}^{(K-1)}}^2]\,.
\end{align}
The first term above is
\begin{align*}
    &\E[\norm{X_t - X_{nh+mh/M}}^2]
    = (t-nh-mh/M)^2 \E[\norm{s(X_{nh+mh/M}^{(K-1)})}^2] + d\,(t-nh-mh/M) \\
    &\qquad \le \frac{2h^2}{M^2} \E[\norm{\nabla V(X_{nh+mh/M}^{(K-1)})}^2] + \frac{2\delta^2 h^2}{M^2} + \frac{dh}{M} \\
    &\qquad \le \frac{4\beta^2 h^2}{M^2} \E[\norm{X_t - X_{nh+mh/M}^{(K-1)}}^2] + \frac{4h^2}{M^2} \E[\norm{\nabla V(X_t)}^2] + \frac{2\delta^2 h^2}{M^2} + \frac{dh}{M}\,.
\end{align*}
Substituting this into~\cref{eq:disc_1} and using $\beta h\le 1/10$ yields
\begin{align*}
    \E[\norm{X_t - X_{nh+mh/M}^{(K-1)}}^2]
    &\le \frac{4.4 h^2}{M^2} \E[\norm{\nabla V(X_t)}^2] + \frac{2.2\delta^2 h^2}{M^2} + \frac{1.1dh}{M} \\
    &\qquad{} + 2.2\E[\norm{X_{nh+mh/M}^{(K)} - X_{nh+mh/M}^{(K-1)}}^2]\,.
\end{align*}
Now,~\citet[Lemma 16]{chewi2021analysis} yields
\begin{align*}
    \E[\norm{\nabla V(X_t)}^2]
    &\le \FI{\mu_t \river \pi} + 2\beta d\,.
\end{align*}
For the last term, we can apply \Cref{lem:parallel refinement}.

  Substituting everything into \cref{ineq:time derivative of KL} and cleaning up the terms yields
\begin{align*}
\partial_t \DKL{\mu_t \river\pi}
&\le -0.66\FI{\mu_t\river \pi} + 2.5\delta^2 \\
&\qquad{}+ 2\beta^2 \,\Bigl[\frac{2dh}{M} + 75\exp(-3.5K)\,\Bigl(1.4dh + \frac{8\beta^2 h^2}{\alpha}\DKL{\mu_{nh}\river\pi}\Bigr)\Bigr]\,.
\end{align*}
Assuming that $K \ge 1.3+0.3 \log M$, and using the LSI\@,
\begin{align*}
    \partial_t \DKL{\mu_t \river\pi}
    &\le -1.3\alpha \DKL{\mu_t\river \pi} + 2.5\delta^2 + \frac{6.8\beta^2 dh}{M} + \frac{16\beta^4 h^2}{\alpha M} \DKL{\mu_{nh}\river\pi}\,.
\end{align*}
Integrating this inequality,
\begin{align*}
    \DKL{\mu_{(n+1)h} \river \pi}
    &\le \Bigl[\exp(-1.3\alpha h) + \frac{16\beta^4 h^3}{\alpha M}\Bigr] \DKL{\mu_{nh} \river \pi} + 2.5\delta^2 h + \frac{6.8\beta^2 dh^2}{M}\,.
\end{align*}
Provided $M \ge 6.4\kappa^2$, then $\exp(-1.3\alpha h) + \frac{16\beta^4}{h^3}{\alpha M} \le \exp(-\alpha h)$.
Iterating,
\begin{align*}
    \DKL{\mu_{Nh}\river \pi}
    &\le \exp(-\alpha Nh) \DKL{\mu_0\river \pi} + \frac{2.8\delta^2}{\alpha} + \frac{7.5\beta^2 dh}{\alpha M}\,.
\end{align*}
Thus we obtain the guarantee in KL divergence. The guarantees in TV and $W_2$ distance follow from Pinsker's and Talagrand's inequality respectively.
\end{proof}

\section{Proofs for ULMC}\label{app:ulmc}

We turn towards the analysis of~\Cref{alg:ulmc}.
We start by bounding the discretization error between the algorithm and the continuous-time process using Girsanov's theorem.
Throughout, let $\mu_{Nh}$ denote the law of the output of the algorithm, and let $\pi_t$ denote the marginal law of the continuous-time Langevin diffusion at time $t$ started from $\mu_0$.

First, we need a lemma.

\begin{lemma}\label{lem:ulmc_moment_bds}
    Let $(X_t, P_t)_{t\ge 0}$ denote the continuous-time underdamped Langevin diffusion, started at $(X_0, P_0) \sim \mu_0$.
    Assume that $V$ is $\beta$-smooth, and that $\pi^X \propto \exp(-V)$ satisfies Talagrand's $T_2$ inequality with constant $\alpha$.
    Let $\pi = \pi^X \otimes \mc N(0, I)$.
    Then,
    \begin{align*}
        \Ex[\norm{\nabla V(X_t)}^2] \le 2\beta d +\frac{4\beta^2}{\alpha} \DKL{\mu_0\river \pi}\,, \qquad \Ex[\norm{P_t}^2] \le 2d +\DKL{\mu_0\river \pi}\,.
    \end{align*}
\end{lemma}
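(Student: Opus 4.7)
The plan is to exploit stationarity of $\pi = \pi^X \otimes \mc N(0, I)$ for the continuous-time underdamped Langevin diffusion. Since any Markov semigroup contracts KL divergence to its invariant law (via the data processing inequality), the map $t \mapsto \DKL{\mu_t \river \pi}$ is nonincreasing, so $\DKL{\mu_t \river \pi} \le \DKL{\mu_0 \river \pi}$ for all $t \ge 0$. Applying DPI once more to the projections $(X,P) \mapsto X$ and $(X,P)\mapsto P$ gives
\[
\DKL{\mu_t^X \river \pi^X} \le \DKL{\mu_0 \river \pi} \quad\text{and}\quad \DKL{\mu_t^P \river \mc N(0, I)} \le \DKL{\mu_0 \river \pi},
\]
reducing both desired inequalities to a moment estimate for a measure that is close in KL to a well-behaved reference.

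For the $\nabla V$ bound, I would invoke the same estimate already used in the proof of~\Cref{thm:sample-main}, namely \citet[Lemma 10]{VW19}: whenever $\pi^X$ satisfies $T_2(\alpha)$ and $V$ is $\beta$-smooth, every probability measure $\mu$ on $\R^d$ satisfies $\Ex_\mu[\norm{\nabla V}^2] \le 2\beta d + \frac{4\beta^2}{\alpha}\DKL{\mu \river \pi^X}$. Taking $\mu = \mu_t^X$ and combining with the marginal KL inequality above yields the first conclusion immediately.

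For the momentum bound, the reference measure is $\mc N(0, I)$, whose potential $p \mapsto \frac{1}{2}\norm{p}^2$ is $1$-smooth and $1$-strongly log-concave (hence satisfies $T_2(1)$), and whose gradient is just $p$ itself. The same moment estimate applied in this setting gives $\Ex[\norm{P_t}^2] \le 2d + C\,\DKL{\mu_t^P \river \mc N(0, I)}$ for an absolute constant; a sharper argument through Donsker--Varadhan with the $\chi^2_d$ moment generating function can be used if the precise constant in the statement is required. Combining with the marginal KL inequality yields the second conclusion.

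The only thing that genuinely requires checking is the KL monotonicity itself, which is slightly nontrivial only insofar as the underdamped dynamics is hypoelliptic (noise drives only the momentum). However, KL contraction along any Markov transition toward its invariant measure is completely general, so all that is needed is the standard Fokker--Planck verification that $\pi^X \otimes \mc N(0, I)$ is indeed invariant for the SDE (this is exactly the fluctuation--dissipation reason for the calibrated coefficients $-\gamma P\,dt + \sqrt{2\gamma}\,dB_t$ in the momentum equation). I expect no real obstacles; the lemma essentially packages together stationarity, DPI, and an off-the-shelf moment bound.
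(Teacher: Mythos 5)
Your proposal is correct and takes essentially the same route as the paper: the paper's proof is exactly the chain ``Lipschitz $+$ Talagrand $T_2$ $+$ data-processing inequality (both for the $(X,P)\mapsto X$ and $(X,P)\mapsto P$ projections and for the Markov semigroup $\mu_0 \mapsto \mu_t$),'' which is what you describe, invoking~\citet[Lemma 10]{VW19} for the $\nabla V$ term and the analogous one-line transport estimate for $P$. One useful observation you implicitly raise: the Lipschitz-plus-$T_2$ route gives $\Ex[\norm{P_t}^2] \le 2d + 4\DKL{\mu_0\river\pi}$, not $2d + \DKL{\mu_0\river\pi}$; the paper's own displayed proof also produces the factor $4$, so the stated coefficient $1$ appears to be a typo in the lemma statement (and Donsker--Varadhan cannot in fact attain coefficient $1$ on the KL term with a finite additive constant, since $\Ex_{\mc N(0,I)}[e^{\norm{\cdot}^2}]=\infty$).
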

\begin{proof}
    For the first bound, we use a similar proof as~\citet[Lemma 10]{VW19}.
    Namely, by Lipschitzness of $\nabla V$, the transport inequality, and the data-processing inequality,
    \begin{align*}
        \Ex[\norm{\nabla V(X_t)}^2]
        &\le 2\,\Ex_{\pi^X}[\norm{\nabla V}^2] + 2\beta^2 \, W_2^2(\law(X_t), \pi^X)
        \le 2\beta d + \frac{4\beta^2}{\alpha} \DKL{\law(X_t) \river \pi^X} \\
        &\le 2\beta d + \frac{4\beta^2}{\alpha} \DKL{\law(X_t, P_t) \river \pi}
        \le 2\beta d + \frac{4\beta^2}{\alpha} \DKL{\mu_0 \river \pi}\,.
    \end{align*}
    Similarly,
    \begin{align*}
        \Ex[\norm{P_t}^2]
        &\le 2\,\Ex_{\mc N(0, I)}[\norm{\cdot}^2] + 2\,W_2^2(\law(P_t), \mc N(0,I))
        \le 2d + 4\DKL{\mu_0 \river \pi}\,.
    \end{align*}
    This completes the proof.
\end{proof}

We now state and prove our main discretization bound.

\begin{theorem}\label{thm:ulmc_discretization}
    Suppose that $V$ is $\beta$-smooth and that $\pi^X\propto\exp(-V)$ satisfies Talagrand's $T_2$ inequality with constant $\alpha$. Let $\kappa \deq \beta/\alpha$.
    Assume that the parallel depth satisfies $K \gtrsim \log M$ (for a sufficiently large implied constant) and that $h \lesssim 1/\sqrt\beta$ (for a sufficiently small implied constant).
    Then, it holds that
    \begin{align*}
        \DKL{\pi_T \river \mu_T}
        &\lesssim \frac{T}{\sqrt\beta}\,\Bigl(\delta^2 + \frac{\beta^2 dh^2}{M^2} + \frac{\beta^2 h^2}{M^2}\,\bigl(1+\frac{\kappa}{M^2}\bigr)\DKL{\mu_0\river \pi}\Bigr)\,.
    \end{align*}
\end{theorem}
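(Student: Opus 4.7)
The strategy is a Girsanov-based discretization analysis analogous to that of~\cite{AltChe23warm, Zhangetal23ULMC}. First, I view the algorithm's trajectory as a continuous-time process: since the exponential-Euler updates~\cref{eq:ulmc_X,eq:ulmc_P} are exact integrations of a piecewise-constant-drift linear SDE, the process $(X^{(K)}, P^{(K)})$ extends to continuous time by
\[ dX^{(K)}_t = P^{(K)}_t\,dt, \qquad dP^{(K)}_t = -s(X^{(K-1)}_{\tau(t)})\,dt - \gamma P^{(K)}_t\,dt + \sqrt{2\gamma}\,dB_t, \]
driven by the same Brownian motion $B$ as the continuous underdamped Langevin diffusion. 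Only the momentum drift differs between the two SDEs, so Girsanov's theorem combined with the data-processing inequality yields
\[ \DKL{\pi_T \river \mu_T} \le \frac{1}{4\gamma} \int_0^T \Ex[\norm{\nabla V(X_t) - s(X^{(K-1)}_{\tau(t)})}^2]\,dt, \]
where the expectation is taken under the continuous-time path measure, so $X_t$ is the exact Langevin diffusion and the Picard iterates are defined by replaying the algorithm's recursion using the continuous-time values at the top of each $h$-block and the same Brownian motion.

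Next, I would bound the integrand using $\delta$-accuracy of $s$ and $\beta$-smoothness of $V$,
\[ \norm{\nabla V(X_t) - s(X^{(K-1)}_{\tau(t)})}^2 \lesssim \delta^2 + \beta^2\,\norm{X_t - X^{(K-1)}_{\tau(t)}}^2, \]
and split $X_t - X^{(K-1)}_{\tau(t)}$ via the triangle inequality into three pieces: the continuous-time microstep $X_t - X_{\tau(t)}$, the Picard-discretization gap $X_{\tau(t)} - X^{(K)}_{\tau(t)}$ between the continuous diffusion and the $K$-th Picard iterate started from the same point at the top of each $h$-block, and the Picard-refinement gap $X^{(K)}_{\tau(t)} - X^{(K-1)}_{\tau(t)}$.

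To control these pieces: (i) For the microstep, $\Ex[\norm{X_t - X_{\tau(t)}}^2] \lesssim (h/M)^2 \sup_{s\le t}\Ex[\norm{P_s}^2]$ is bounded using the momentum moment of~\Cref{lem:ulmc_moment_bds}. (ii) For the refinement, I would prove an ULMC analog of~\Cref{lem:parallel refinement}: letting $\mc E_k \deq \sup_m \Ex[\norm{X^{(k)}_{nh+mh/M} - X^{(k-1)}_{nh+mh/M}}^2 + \gamma^{-2}\,\norm{P^{(k)}_{nh+mh/M} - P^{(k-1)}_{nh+mh/M}}^2]$, a direct computation using~\cref{eq:ulmc_X,eq:ulmc_P} yields a contraction $\mc E_{k+1} \lesssim (\beta h/M)^2\,\mc E_k + \delta^2\,(h/M)^2$, exploiting that the correlated Gaussian increments from~\cref{eq:correlated_gaussian} are shared between consecutive iterates and cancel, while the contraction factor comes from $(1-e^{-\gamma h/M})/\gamma \le h/M$ together with smoothness; this gives $\mc E_K$ decay of order $\exp(-\Theta(K))$. (iii) For the Picard-discretization gap, note that the Picard fixed point $X^{(\infty)}$ is exactly the exponential-Euler discretization with drift $s$, so $X_{\tau(t)} - X^{(\infty)}_{\tau(t)}$ is a standard one-step ULMC discretization error controlled via the gradient-moment bound of~\Cref{lem:ulmc_moment_bds}, while $X^{(\infty)}_{\tau(t)} - X^{(K)}_{\tau(t)}$ is handled by the same refinement recursion.

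The main technical obstacle is establishing the refinement recursion for ULMC: unlike LMC, position and momentum are coupled through both the exponential-Euler drift and the correlated Gaussian~\cref{eq:correlated_gaussian}, so a joint analysis tracking $(X^{(k)}-X^{(k-1)}, P^{(k)}-P^{(k-1)})$ with the correct relative weighting (the factor $\gamma^{-2}$ above) is required to avoid spurious blow-ups. Once the three pieces are assembled, summing the bounds across microsteps within each $h$-block, integrating from $0$ to $T$, and multiplying by $1/(4\gamma) = \Theta(1/\sqrt\beta)$ (since $\gamma = \sqrt{8\beta}$) produces the claimed bound; the hypothesis $K \gtrsim \log M$ absorbs the exponentially decaying refinement contributions into the leading discretization and approximation-error terms, and $h\lesssim 1/\sqrt\beta$ is used to ensure both $\beta h/M \ll 1$ for the contraction and the validity of the Novikov condition.
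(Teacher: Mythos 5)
Your overall strategy is the same as the paper's: change measure via Girsanov so that the algorithm process becomes the exact underdamped Langevin diffusion, reducing $\DKL{\pi_T\river\mu_T}$ to an integral of the squared drift mismatch, and then control that mismatch via smoothness, a microstep bound, and a Picard-refinement recursion. However, there is a conceptual wrinkle in your decomposition that you should fix.

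In the Girsanov setup, the process $(X^{(K)}, P^{(K)})$ and the continuous-time diffusion $(X, P)$ are the \emph{same} process viewed under two measures: one writes the algorithm's SDE as $dP^{(K)}_t = -\nabla V(X^{(K)}_t)\,dt - \gamma P^{(K)}_t\,dt + \sqrt{2\gamma}\,d\tilde B_t$ where $\tilde B$ absorbs the drift error, and $\mb Q$ is chosen so that $\tilde B$ is Brownian. Under $\mb Q$ the very same random variable $X^{(K)}_t$ has law $\pi_t$. Therefore $X_t = X^{(K)}_t$ identically, and your middle term, the ``Picard-discretization gap'' $X_{\tau(t)} - X^{(K)}_{\tau(t)}$, vanishes. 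The decomposition is only two-way: microstep $X_t - X_{\tau(t)}$ plus refinement $X^{(K)}_{\tau(t)} - X^{(K-1)}_{\tau(t)}$. Your item (iii), bounding $X_{\tau(t)} - X^{(\infty)}_{\tau(t)}$ as a separate one-step discretization error, is extra machinery that does not fit inside the single-process Girsanov computation; carrying it out would require a second process comparison (coupling or a second Girsanov argument) that your setup does not support. A related subtlety you do not surface: when estimating the base case $\mc E_1$ under $\mb Q$, the driving noise is $\tilde B$, not $B$, and rewriting $B = \tilde B + \int(\dots)\,dt$ reintroduces the drift mismatch $\Delta \deq \sup_t\Ex_{\mb Q}[\norm{s(X^{(K-1)}_{\tau(t)}) - \nabla V(X_t)}^2]$ into the bound on $\mc E_1$, producing a self-referential inequality for $\Delta$ that is then absorbed using $h\lesssim 1/\sqrt\beta$ and $K\gtrsim\log M$.

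Two minor points. Your claimed contraction $\mc E_{k+1}\lesssim(\beta h/M)^2\,\mc E_k$ is too strong: accumulating the $O(M)$ micro-steps within an $h$-block, together with the double integration (force $\to$ momentum $\to$ position), gives the correct factor $\beta^2 h^4$, which under $h\lesssim 1/\sqrt\beta$ still yields the $\exp(-\Omega(K))$ decay needed, and is the reason the hypothesis $K\gtrsim\log M$ (rather than $K\gtrsim 1$) appears. Your Girsanov prefactor $1/(4\gamma)$ is the right constant from the It\^o isometry; this is a harmless cosmetic difference from the paper.
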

\begin{proof}
    Let $\mb P$ denote the Wiener measure on $[0,T]$, under which $(B_t)_{t\in [0,T]}$ is a standard Brownian motion.
    Using this Brownian motion, we define the algorithm process, i.e.,
    \begin{align*}
        dX_t^{(k+1)}
        &= P_t^{(k+1)}\, dt\,, \\
        dP_t^{(k+1)}
        &= -s(X_{\tau(t)}^{(k)})\, dt -\gamma P_t^{(k+1)}\, dt + \sqrt{2\gamma}\, dB_t\,.
    \end{align*}
    We also drop the superscripts for parallel depth $K$, i.e., $(X_t^{(K)}, P_t^{(K)}) = (X_t, P_t)$.
    We now write
    \begin{align*}
        dP_t
        &= -\nabla V(X_t)\, dt - \gamma P_t \, dt + \sqrt{2\gamma} \, d\tilde B_t
    \end{align*}
    where $d\tilde B_t = dB_t - \frac{1}{\sqrt{2\gamma}}\,(s(X_{\tau(t)}^{(K-1)}) - \nabla V(X_t)) \, dt$.
    By Girsanov's theorem~\citep[see][\S 5.6]{LeG16StocCalc}, if we define the path measure $\mb Q$ via
    \begin{align}\label{eq:girsanov}
        \frac{d\mb Q}{d\mb P}
        &= \exp\Bigl( \frac{1}{\sqrt{2\gamma}} \int_0^T \langle s(X_{\tau(t)}^{(K-1)}) - \nabla V(X_t), dB_t\rangle - \frac{1}{8\gamma} \int_0^T \norm{s(X_{\tau(t)}^{(K-1)})-\nabla V(X_t)}^2 \, dt \Bigr)\,,
    \end{align}
    then under $\mb Q$ the process $\tilde B$ is a standard Brownian motion. It follows readily that under $\mb Q$, the process $(X, P)$ is the continuous-time underdamped Langevin diffusion.
    By the data-processing inequality and~\cref{eq:girsanov},
    \begin{align}
        &\DKL{\pi_T \river \mu_T}
        \le \DKL{\mb Q \river \mb P}
        = \Ex_{\mb Q} \log \frac{d\mb Q}{d\mb P} \nonumber \\
        &\qquad = \Ex_{\mb Q}\Bigl[\frac{1}{\sqrt{2\gamma}} \int_0^T \langle s(X_{\tau(t)}^{(K-1)}) - \nabla V(X_t), dB_t\rangle -\frac{1}{8\gamma}\int_0^T \norm{s(X_{\tau(t)}^{(K-1)})-\nabla V(X_t)}^2 \, dt\Bigr] \nonumber \\
        &\qquad = \Ex_{\mb Q}\Bigl[\frac{1}{\sqrt{2\gamma}} \int_0^T \langle s(X_{\tau(t)}^{(K-1)}) - \nabla V(X_t), d\tilde B_t\rangle +\frac{1}{8\gamma}\int_0^T \norm{s(X_{\tau(t)}^{(K-1)})-\nabla V(X_t)}^2 \, dt\Bigr] \nonumber \\
        &\qquad = \frac{1}{8\gamma}\,\Ex_{\mb Q} \int_0^T \norm{s(X_{\tau(t)}^{(K-1)})-\nabla V(X_t)}^2 \, dt\,.\label{eq:ulmcpf6}
    \end{align}
    From now on, all expectations are taken under $\mb Q$ and we drop the subscript $\mb Q$ from the notation.
    We focus on $t$ lying in the interval $[nh, (n+1)h]$.
    
    Of course, using the fact that we have $\delta$-accurate gradient evaluations,
    \begin{align}
        \Ex[\norm{s(X_{\tau(t)}^{(K-1)})-\nabla V(X_t)}^2]
        &\lesssim \delta^2 + \Ex[\norm{\nabla V(X_{\tau(t)}^{(K-1)})-\nabla V(X_t)}^2] \nonumber\\
        &\le \delta^2 + \beta^2 \,\Ex[\norm{X_{\tau(t)}^{(K-1)}-X_t}^2]\,. \label{eq:ulmcpf4}
    \end{align}
    We split this into two terms:
    \begin{align}\label{eq:ulmcpf5}
        \Ex[\norm{X_{\tau(t)}^{(K-1)}-X_t}^2]
        &\lesssim \Ex[\norm{X_t - X_{\tau(t)}}^2] + \Ex[\norm{X_{\tau(t)} - X_{\tau(t)}^{(K-1)}}^2]\,.
    \end{align}
    We begin with the recursive term (the second one).

    For any $k=1,\dotsc,K$, let
    \begin{align*}
        \mc E_k
        &\deq \max_{m=1,\dotsc,M} \Ex[\norm{X_{nh+mh/M}^{(k)} - X_{nh+mh/M}^{(k-1)}}^2]\,.
    \end{align*}
    To bound this quantity, we start with
    \begin{align}
        \Ex[\norm{X_{nh+mh/M}^{(k)} - X_{nh+mh/M}^{(k-1)}}^2]
        &= \Ex\Bigl[\Bigl\lVert \int_{nh}^{nh+mh/M} (P_t^{(k)} - P_t^{(k-1)}) \, dt \Bigr\rVert^2\Bigr] \nonumber \\
        &\le h\int_{nh}^{nh+mh/M} \Ex[\norm{P_t^{(k)} - P_t^{(k-1)}}^2] \, dt\,. \label{eq:ulmcpf1}
    \end{align}
    Next,
    \begin{align*}
        \Ex[\norm{P_t^{(k)} - P_t^{(k-1)}}^2]
        &= \Ex\Bigl[\Bigl\lVert \int_{nh}^t \{-(s(X_{\tau(s)}^{(k-1)}) - s(X_{\tau(s)}^{(k-2)})) - \gamma\,(P_s^{(k)} - P_s^{(k-1)})\} \, ds \Bigr\rVert^2\Bigr] \\
        &\lesssim h\int_{nh}^t \Ex[\norm{s(X_{\tau(s)}^{(k-1)}) - s(X_{\tau(s)}^{(k-2)})}^2 + \gamma^2 \,\norm{P_s^{(k)} - P_s^{(k-1)}}^2] \, ds\,.
    \end{align*}
    By Gr\"onwall's inequality,
    \begin{align*}
        \Ex[\norm{P_t^{(k)} - P_t^{(k-1)}}^2]
        &\lesssim h\exp(O(\gamma^2 h^2)) \int_{nh}^t\Ex[\norm{s(X_{\tau(s)}^{(k-1)}) - s(X_{\tau(s)}^{(k-2)})}^2] \, ds\,.
    \end{align*}
    Recall that $\gamma^2 \asymp \beta$. We assume throughout that $h \lesssim 1/\sqrt\beta$ for a sufficiently small implied constant, so that $\gamma^2 h^2 \lesssim 1$.
    Therefore,
    \begin{align*}
        \Ex[\norm{P_t^{(k)} - P_t^{(k-1)}}^2]
        &\lesssim h \int_{nh}^t\Ex[\norm{s(X_{\tau(s)}^{(k-1)}) - s(X_{\tau(s)}^{(k-2)})}^2] \, ds \\
        &\lesssim \delta^2 h^2 + h \int_{nh}^t\Ex[\norm{\nabla V(X_{\tau(s)}^{(k-1)}) - \nabla V(X_{\tau(s)}^{(k-2)})}^2] \, ds \\
        &\lesssim \delta^2 h^2 + \beta^2 h \int_{nh}^t\Ex[\norm{X_{\tau(s)}^{(k-1)} - X_{\tau(s)}^{(k-2)}}^2] \, ds
        \le \delta^2 h^2 + \beta^2 h^2 \, \mc E_{k-1}\,.
    \end{align*}
    Substituting this into~\cref{eq:ulmcpf1}, we obtain
    \begin{align*}
        \mc E_k \lesssim \delta^2 h^4 + \beta^2 h^4\, \mc E_{k-1}\,.
    \end{align*}
    Using $h \lesssim 1/\sqrt\beta$ and iterating this bound,
    \begin{align}\label{eq:ulmcpf3}
        \mc E_K
        &\lesssim \exp(-\Omega(K))\,\mc E_1 + \delta^2 h^4\,.
    \end{align}
    We must now bound $\mc E_1$. To do so, we note that
    \begin{align}\label{eq:ulmcpf2}
        \Ex[\norm{X_{nh+mh/M}^{(1)} - X_{nh}}^2]
        &= \Ex\Bigl[\Bigl\lVert \int_{nh}^{nh+mh/M} P_t^{(1)} \, dt \Bigr\rVert^2\Bigr]
        \le h\int_{nh}^{nh+mh/M} \Ex[\norm{P_t^{(1)}}^2] \, dt\,.
    \end{align}
    Also,
    \begin{align*}
        \Ex[\norm{P_t^{(1)}}^2]
        &\lesssim \Ex[\norm{P_{nh}}^2] + \Ex\Bigl[\Bigl\lVert \int_{nh}^t \{-s(X_{nh}) -\gamma P_s^{(1)}\} \, ds + \sqrt{2\gamma}\,(B_t - B_{nh}) \Bigr\rVert^2\Bigr] \\
        &\lesssim \Ex[\norm{P_{nh}}^2] + h^2 \,\Ex[\norm{s(X_{kh})}^2] +\gamma^2 h \int_{nh}^t \Ex[\norm{P_s^{(1)}}^2] \, ds \\
        &\qquad{} + \gamma \,\Ex[\norm{\tilde B_t - \tilde B_{nh}}^2] + \Ex\Bigl[\Bigl\lVert \int_{nh}^t (s(X_{\tau(s)}^{(K-1)}) - \nabla V(X_s)) \, ds \Bigr\rVert^2\Bigr] \\
        &\lesssim \mc P + h^2 \delta^2 + h^2 \mc G +\gamma^2 h \int_{nh}^t \Ex[\norm{P_s^{(1)}}^2] \, ds + \gamma dh + h^2 \Delta\,.
    \end{align*}
    In the above bound, we were careful to recall that we are working under $\mb Q$, for which $\tilde B$ is the Brownian motion (not $B$).
    Also, we have defined the following quantities:
    \begin{align*}
        \mc P \deq \sup_{t\in [0,T]} \Ex[\norm{P_t}^2]\,, \qquad \mc G \deq \sup_{t\in [0,T]} \Ex[\norm{\nabla V(X_t)}^2]\,,
    \end{align*}
    and
    \begin{align*}
        \Delta \deq \sup_{t\in [nh, (n+1)h]} \Ex[\norm{s(X_{\tau(t)}^{(K-1)}) -\nabla V(X_t)}^2]\,.
    \end{align*}
    Applying Gr\"onwall's inequality again,
    \begin{align*}
        \Ex[\norm{P_t^{(1)}}^2]
        &\lesssim \mc P + h^2 \delta^2 + h^2 \mc G + \gamma dh + h^2 \Delta\,.
    \end{align*}
    Substituting this into~\cref{eq:ulmcpf2},
    \begin{align*}
        \mc E_1
        &\lesssim h^2 \mc P + h^4 \delta^2 + h^4 \mc G + \gamma dh^3 + h^4 \Delta\,.
    \end{align*}
    Substituting this into~\cref{eq:ulmcpf3} now yields
    \begin{align*}
        \mc E_K
        &\lesssim \exp(-\Omega(K)) \,(h^2 \mc P + h^4 \mc G + \gamma dh^3 + h^4 \Delta) + \delta^2 h^4\,.
    \end{align*}

    Recalling the definition of $\Delta$ and from~\cref{eq:ulmcpf4} and~\cref{eq:ulmcpf5}, we have proven that
    \begin{align*}
        \Delta
        \lesssim \delta^2 + \beta^2 \,&\Bigl(\sup_{t\in [nh,(n+1)h]} \Ex[\norm{X_t - X_{\tau(t)}}^2] \\
        &\qquad{}+ \exp(-\Omega(K)) \,(h^2 \mc P + h^4 \mc G + \gamma dh^3 + h^4 \Delta) + \delta^2 h^4 \Bigr)\,.
    \end{align*}
    Using $h\lesssim 1/\sqrt\beta$, this yields
    \begin{align*}
        \Delta
        &\lesssim \delta^2 + \beta^2 \,\Bigl(\sup_{t\in [nh,(n+1)h]} \Ex[\norm{X_t - X_{\tau(t)}}^2] + \exp(-\Omega(K)) \,(h^2 \mc P + h^4 \mc G + \gamma dh^3) \Bigr)\,.
    \end{align*}
    We also note that
    \begin{align*}
        \Ex[\norm{X_t - X_{\tau(t)}}^2]
        &= \Ex\Bigl[\Bigl\lVert \int_{\tau(t)}^t P_s \, ds \Bigr\rVert^2\Bigr]
        \le \frac{h^2}{M^2} \, \mc P\,.
    \end{align*}
    The quantities $\mc P$, $\mc G$ are controlled via~\Cref{lem:ulmc_moment_bds}.
    Now assume that $\exp(-\Omega(K)) \le 1/M^4$, which only requires $K \gtrsim \log M$ for a sufficiently large absolute constant.
    When the dust settles,
    \begin{align*}
        \Delta
        &\lesssim \delta^2 + \frac{\beta^2 dh^2}{M^2} + \frac{\beta^2 h^2}{M^2}\,\bigl(1+\frac{\kappa}{M^2}\bigr)\DKL{\mu_0\river \pi}
    \end{align*}
    Substituting this into~\cref{eq:ulmcpf6}, and recalling that $\gamma \asymp \sqrt{\beta}$, we finally obtain
    \begin{align*}
        \DKL{\pi_T \river \mu_T}
        &\lesssim \frac{T}{\sqrt\beta}\,\Bigl(\delta^2 + \frac{\beta^2 dh^2}{M^2} + \frac{\beta^2 h^2}{M^2}\,\bigl(1+\frac{\kappa}{M^2}\bigr)\DKL{\mu_0\river \pi}\Bigr)\,.
    \end{align*}
    This completes the proof.
\end{proof}

We must complement the discretization bound with a continuous-time convergence result, which can be obtained from off-the-shelf results.
See~\citet[Lemma 5]{Zhangetal23ULMC} for a statement which is convenient for our setting (adapted from~\cite{Maetal21NesterovMCMC}, which in turn followed the original entropic hypocoercivity due to Villani~\citep{Vil09Hypo}; see also~\citet{Mon23HMC} for the corresponding result for idealized Hamiltonian Monte Carlo).

\begin{theorem}\label{thm:ent_hypo}
    Assume that $V$ is $\beta$-smooth and that $\pi^X\propto \exp(-V)$ satisfies the LSI with constant $\alpha$.
    Consider the functional
    \begin{align*}
        \mc F(\mu\;\|\; \pi)
        &\deq \DKL{\mu \river \pi} + \Ex_\mu\bigl[\bigl\lVert \mf M^{1/2}\, \nabla \log \frac{\mu}{\pi}\bigr\rVert^2\bigr]\,, \qquad \mf M \deq \begin{bmatrix}
            1/(4\beta) & 1/\sqrt{2\beta} \\
            1/\sqrt{2\beta} & 4
        \end{bmatrix} \otimes I\,.
    \end{align*}
    Then, for all $t\ge 0$,
    \begin{align*}
        \mc F(\pi_t \;\|\;\pi)
        &\le \exp\Bigl(-\frac{\alpha t}{10\sqrt{2\beta}}\Bigr)\,\mc F(\pi_0\;\|\;\pi)\,.
    \end{align*}
\end{theorem}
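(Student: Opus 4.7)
Since the theorem is stated as an off-the-shelf result adapted from Villani's entropic hypocoercivity~\cite{Vil09Hypo} via \cite{Maetal21NesterovMCMC, Zhangetal23ULMC}, the plan is to follow the standard modified-entropy method. Let $h_t \deq \pi_t/\pi$ and let $L$ denote the generator of the underdamped Langevin diffusion with invariant measure $\pi = \pi^X \otimes \mc N(0,I)$. I would compute $\partial_t \mc F(\pi_t\river\pi)$ by differentiating the KL piece and the Fisher piece separately.

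First, for the KL piece, the standard $\Gamma$-calculus gives
\begin{align*}
    \partial_t \DKL{\pi_t\river\pi} = -\gamma\, \Ex_{\pi_t}[\norm{\nabla_p \log h_t}^2]\,,
\end{align*}
since the Hamiltonian part of $L$ is antisymmetric in $L^2(\pi)$ and contributes no dissipation. This is degenerate: it dissipates only the momentum Fisher information, so one cannot conclude exponential decay of entropy alone.

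Second, for the Fisher piece $\Ex_{\pi_t}[\norm{\mf M^{1/2} \nabla \log h_t}^2]$, I would commute the gradients $\nabla_x$ and $\nabla_p$ through $L$. The relevant commutators are $[\nabla_p, L] = \nabla_x - \gamma \nabla_p$ and $[\nabla_x, L] = -\nabla^2 V \cdot \nabla_p$. Expanding the time derivative yields a quadratic form in the vector $(\nabla_x \log h_t, \nabla_p \log h_t)$ whose coefficient matrix is determined by $\mf M$, the friction $\gamma$, and (via Cauchy--Schwarz and $\beta$-smoothness) the Hessian $\nabla^2 V$. The crucial point is that the mixed off-diagonal entry $1/\sqrt{2\beta}$ in $\mf M$ is what creates a new dissipation term in $\norm{\nabla_x \log h_t}^2$, filling in the missing direction.

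Third, I would verify that with the choice $\gamma = \sqrt{8\beta}$ and $\mf M$ as stated, the combined quadratic form dominates
\begin{align*}
    \frac{\alpha}{10\sqrt{2\beta}} \,\bigl(\DKL{\pi_t\river\pi} + \Ex_{\pi_t}[\norm{\mf M^{1/2}\nabla\log h_t}^2]\bigr)\,,
\end{align*}
where the LSI for $\pi^X$ is used to convert $\Ex_{\pi_t}[\norm{\nabla_x\log h_t}^2]$ into an upper bound on $\DKL{\pi_t\river\pi}$ (with the momentum marginal handled by the Gaussian LSI for $\mc N(0,I)$). Gr\"onwall's inequality then yields the claimed exponential decay. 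The main obstacle is the bookkeeping in this last step: ensuring the specific numerical constants $1/(4\beta)$, $1/\sqrt{2\beta}$, $4$, and the rate $1/(10\sqrt{2})$ really close, which is where citing the explicit computations of~\citet{Maetal21NesterovMCMC} and~\citet[Lemma 5]{Zhangetal23ULMC} is essentially mandatory rather than redoing the linear algebra from scratch.
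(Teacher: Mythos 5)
Your proposal correctly identifies that the paper proves this by citation to \citet[Lemma 5]{Zhangetal23ULMC} (itself adapted from \citet{Maetal21NesterovMCMC} following Villani's entropic hypocoercivity), and your sketch of the modified-entropy argument — degenerate entropy dissipation in $\nabla_p$, the commutators $[\nabla_p,L]=\nabla_x-\gamma\nabla_p$ and $[\nabla_x,L]=-\nabla^2V\cdot\nabla_p$, the cross term in $\mf M$ supplying the missing $\norm{\nabla_x\log h_t}^2$ dissipation, then LSI and Gr\"onwall — is exactly the content of the cited lemma. This matches the paper's approach, which does not reprove the result.
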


We are now ready to prove~\Cref{thm:ulmc_final}.

\medskip{}

\begin{proof}[Proof of~\Cref{thm:ulmc_final}]
    Let us show that $\mu_0 = \mc N(x^\star, \beta^{-1} I) \otimes \mc N(0, I)$ satisfies
    \begin{align*}
        \mc F(\mu_0\;\|\;\pi)
        &\le \frac{d}{2}\,(2+\log \kappa)\,.
    \end{align*}
    From~\Cref{cor:lmc_guarantee}, we know that $\DKL{\mu_0\river\pi} \le \frac{d}{2}\log\kappa$.
    Also,
    \begin{align*}
         \Ex_{\mu_0}\bigl[\bigl\lVert \mf M^{1/2}\, \nabla \log \frac{\mu_0}{\pi}\bigr\rVert^2\bigr]       
         &= \frac{1}{4\beta}\, \Ex_{\mc N(x^\star, \beta^{-1} I)}\bigl[\bigl\lVert \nabla \log \frac{\mc N(x^\star, \beta^{-1} I)}{\pi^X}\bigr\rVert^2\bigr] \\
         &= \frac{1}{4\beta}\, \Ex_{x\sim\mc N(x^\star, \beta^{-1} I)}\bigl[\bigl\lVert \nabla V(x) - \frac{\beta}{2}\,(x-x^\star)\bigr\rVert^2\bigr] \\
         &\le \frac{1}{2\beta}\, \Ex_{x\sim\mc N(x^\star, \beta^{-1} I)}\bigl[\norm{\nabla V(x) - \nabla V(x^\star)}^2 +\frac{\beta^2}{4}\,\norm{x-x^\star}^2\bigr] \\
         &\le \beta\, \Ex_{x\sim\mc N(x^\star, \beta^{-1} I)}[\norm{x -x^\star}^2] \le d\,.
    \end{align*}
    The initialization bound follows.

    The setting of parameters is such that from~\Cref{thm:ulmc_discretization} and~\Cref{thm:ent_hypo} respectively, we have $\DKL{\pi_{Nh} \river \mu_{Nh}} \lesssim \varepsilon^2$ and $\DKL{\pi_{Nh} \river \pi} \lesssim \varepsilon^2$.
    The result now follows from Pinsker's inequality and the triangle inequality for TV\@.
\end{proof}

\section{Proofs for sampling from discrete distributions}\label{app:discrete}

We begin with the proof of~\Cref{lem:computation}.

\medskip{}

\begin{proof}[Proof of~\Cref{lem:computation}]
The first two statements are from \cite{AHLVXY22}.
We only need to verify the last statement. We only need to show that we can approximate $\mean{\tau_{z} \mu}$ for all $z \in \R^n$, given the oracle for the Laplace transform of $\mu$. Since $\mu $ is supported on the hypercube, we can rewrite the $j$-th entry of $\mean {\tau_{z} \mu}$ in term of Laplace transforms of $\mu$, i.e.,
\begin{align*}
    (\mean {\tau_{z} \mu})_j
    &= 2\,\tau_z\mu(x_j = +) - 1
    = \frac{2\sum_{x\in \{\pm\}^n, \; x_j=+} \exp(\langle z,x\rangle) \, \mu(x)}{\sum_{x\in \{\pm\}^n} \exp(\langle z, x\rangle)\,\mu(x)} -1 \\
    &= \frac{2\exp(z_j) \sum_{x\in \{\pm\}^n, \; x_j=+} \exp(\langle z_{-j} ,x_{-j}\rangle) \, \mu(x)}{\sum_{x\in \{\pm\}^n} \exp(\langle z, x\rangle)\,\mu(x)} -1 \\
    &= 2\exp\bigl(z_j + \L_\mu{z^+} - \L_\mu{z}\bigr)-1\,,
\end{align*}
where $z^+ $ (resp.\ $z^-$) is a vector with all entries equal to $z$ except for the $j$-th entry being $+\infty$ (resp.\ $-\infty$).
Using the oracle, we can compute $ \hat{A}_+$ s.t. $ \abs{\hat{A}_+ - (\L_{\mu}{ z^+} - \L_{\mu}{z} )}\leq O(\varepsilon).$
%Similarly, we compute $\hat{A}_-$ s.t. $\abs{\hat{A}_- - (\L_{\mu}{ z^-} - \L_{\mu}{z} )}\leq O(\varepsilon).$ 
Thus,
\begin{align*}
    &\abs{2 \exp(z_j + \hat{A}_+) - 1 - (\mean {\tau_z \mu})_j} \\
    &\qquad = 2 \exp(z_j)\exp(\L_\mu{z^+} - \L_\mu{z})\, \bigl\lvert \exp\bigl(\hat A_+ - (\L_\mu{z^+} - \L_\mu{z})\bigr)-1\bigr\rvert \\
    &\qquad \leq O(\varepsilon) \exp(z_j)\exp(\L_\mu{z^+} - \L_\mu{z})
    = O(\varepsilon) \,\frac{(\mean{\tau_{z}\mu})_j + 1}{2}
    = O(\varepsilon)
\end{align*} 
where the inequality follows from $ \exp(x)-1 \leq 2x$ for $x\in [0,1/2)$.
We use $n$ machines, each of which computes one entry of $\mean{\tau_z \mu}$ using $2$ oracle calls and $O(1)$ parallel iterations.
The estimated score function $s$ satisfies $\norm{s(y) - \nabla V(y)} \lesssim \sqrt{\frac{n}{c} \,\varepsilon^2} = \delta$.
% and the final equality is because
% \[\exp(\L_{\mu}{ y^+}) + \exp(\L_{\mu}(y^-) = w_i \sum_{x:x_i =+1}\exp(\dotprod{w_{-i},x_{-i}} \mu(x) + w_i^{-1} \sum_{x:x_i =-1}\exp(\dotprod{w_{-i},x_{-i}}  = \sum_x  \exp(\dotprod{w,x}) \mu(x).\]
%Rewrite
% \[ \exp (\L_{\mu}{ y^+} - \L_{\mu}{w} ) - \exp (\L_{\mu}{ y^-} - \L_{\mu}{w}) =  \]
\end{proof}

We also need another initialization lemma, since~\Cref{cor:lmc_guarantee} requires knowledge of the minimizer of $V$ which is not necessarily the case for the present application.

\begin{lemma}\label{lem:initialization bound general}
    Let $\mu_0 = \Normal{y, \sigma^2 I }$ for some fixed $y \in \R^n$ and $\sigma^2 > 0$. If $\pi \propto \exp(-V)$ with $\nabla^2 V \preceq \beta I$, then 
    \[\DKL {\mu_0 \river \pi} \leq V(y) +\log Z+ \frac{n}{2}\, (\beta \sigma^2 - \log (2\pi e\sigma^2) )\]
    where $Z = \int  \exp(-V(x))\, dx$. %$\pi (y) = \frac{\exp(-V(y))}{\sum_x \exp(-V(x))}$
\end{lemma}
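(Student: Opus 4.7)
The plan is to expand the KL divergence directly from definitions and bound the only nontrivial term using the quadratic upper bound on $V$ that follows from $\nabla^2 V \preceq \beta I$.

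First I would write
\[
\DKL{\mu_0 \river \pi} = \Ex_{\mu_0}\log \mu_0 - \Ex_{\mu_0} \log \pi = -H(\mu_0) + \Ex_{\mu_0}[V] + \log Z,
\]
where $H(\mu_0) = \frac{n}{2}\log(2\pi e \sigma^2)$ is the standard differential entropy of the Gaussian $\mathcal{N}(y,\sigma^2 I)$. So the computation reduces to bounding $\Ex_{\mu_0}[V]$.

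Since $\nabla^2 V \preceq \beta I$, the standard second-order Taylor-type bound gives
\[
V(x) \leq V(y) + \langle \nabla V(y), x-y\rangle + \frac{\beta}{2}\,\norm{x-y}^2
\]
for all $x$. Taking expectation under $\mu_0 = \mathcal{N}(y,\sigma^2 I)$, the linear term vanishes (the Gaussian is centered at $y$) and $\Ex_{\mu_0}[\norm{X-y}^2] = n\sigma^2$, yielding
\[
\Ex_{\mu_0}[V] \leq V(y) + \frac{n\beta \sigma^2}{2}.
\]

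Substituting these two computations into the expression for the KL divergence produces
\[
\DKL{\mu_0 \river \pi} \leq V(y) + \log Z + \frac{n\beta\sigma^2}{2} - \frac{n}{2}\log(2\pi e \sigma^2),
\]
which matches the claimed bound after factoring out $n/2$. There is no real obstacle: all steps are either direct evaluation of Gaussian integrals or a one-line invocation of the smoothness hypothesis, so I would not expect any subtle point to arise.
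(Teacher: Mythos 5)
Your proof is correct and follows essentially the same approach as the paper: decompose the KL divergence into Gaussian entropy plus $\Ex_{\mu_0}[V] + \log Z$, then bound $\Ex_{\mu_0}[V]$ via the second-order smoothness upper bound on $V$ around $y$ (with the linear term vanishing by symmetry). The only cosmetic difference is that you correctly write the final step as an inequality, whereas the paper's last display has a small typo writing ``$=$'' where ``$\leq$'' is meant.
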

\begin{proof}
    By smoothness,
    $V(x) \leq V(y) + \langle\nabla V(y), x-y\rangle + \frac{\beta}{2}\, \norm{x-y}^2$,
    thus
    \[\E_{x\sim \mu_0} {V(x) }\leq V(y) + \langle \nabla V(y), \E_{x\sim \mu_0} {x- y} \rangle + \frac{\beta}{2}  \E_{x\sim \mu_0} {\norm{x-y}^2} = V(y) + \frac{\beta \sigma^2 n}{2}\]
    and %for $Z = \sum_x \exp(-V(x))$
    \begin{align*}
        \DKL { \mu_0\river \pi}
        &= \E_{x\sim \mu_0} {\log \mu_0(x) + V(x) + \log Z } = -\frac{n}{2} \log (2\pi e\sigma^2) + V(y) + \frac{\beta\sigma^2 n}{2} + \log Z\,,
    \end{align*}
    which is the desired bound.
\end{proof}

\begin{lemma}\label{lem:initialization bound for discrete convolve with Gaussian}
    Consider a density function $ \nu: \set{\pm 1}^n \to \R_{\geq 0}.$ Let $\pi = \nu\ast \Normal{0,c I}$ and $\mu_0 =\Normal{0,c I}$.
    %, and let $ Z= \int \pi(x) dx$ is the normalization factor. We have $\log Z -\log \pi(0) =\frac{n}{2} \log (2\pi c) + \frac{n^2}{2c}   $ and
    Then,
    \[ \DKL {\mu_0 \river \pi} \leq \frac{n}{2c}\,.  \]
\end{lemma}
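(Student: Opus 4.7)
The plan is to compute the Radon--Nikodym derivative $\mu_0/\pi$ explicitly, exploit the hypercube constraint $\norm{x}^2 = n$ to pull out the constant $n/(2c)$, and then kill the remaining term by Jensen's inequality.

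First I would write
\[
\pi(y) = \sum_{x \in \{\pm 1\}^n} \nu(x)\, \frac{1}{(2\pi c)^{n/2}} \exp\bigl(-\norm{y-x}^2/(2c)\bigr)
\]
and take the ratio with $\mu_0(y) = (2\pi c)^{-n/2} \exp(-\norm{y}^2/(2c))$. The Gaussian normalizing constants cancel, and expanding $\norm{y-x}^2 = \norm{y}^2 - 2\langle y,x\rangle + \norm{x}^2$ together with $\norm{x}^2 = n$ on the hypercube gives
\[
\frac{\pi(y)}{\mu_0(y)} = \exp\bigl(-n/(2c)\bigr) \sum_{x\in\{\pm 1\}^n} \nu(x)\,\exp\bigl(\langle y,x\rangle/c\bigr).
\]

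Taking logs and integrating against $\mu_0$ yields
\[
\DKL{\mu_0\river\pi} = \frac{n}{2c} - \Ex_{y\sim\mu_0}\Bigl[\log \Ex_{x\sim\nu} \exp\bigl(\langle y,x\rangle/c\bigr)\Bigr].
\]
The main (and only) step is to show the second term is nonnegative. By Jensen's inequality applied to the concave function $\log$ (equivalently, the lower bound $\log \Ex e^Z \ge \Ex Z$),
\[
\log \Ex_{x\sim\nu} \exp\bigl(\langle y,x\rangle/c\bigr) \ge \Ex_{x\sim\nu}\bigl[\langle y,x\rangle/c\bigr] = \langle y, \mean{\nu}\rangle/c.
\]
Taking expectation under $\mu_0 = \Normal{0,cI}$ and using $\Ex_{\mu_0} y = 0$ gives
\[
\Ex_{y\sim\mu_0}\Bigl[\log \Ex_{x\sim\nu} \exp\bigl(\langle y,x\rangle/c\bigr)\Bigr] \ge 0,
\]
which yields the claimed bound.

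There is no genuine obstacle here; the only thing to notice is that the hypercube assumption is what makes the quadratic term in the ratio collapse into the constant $n/(2c)$, and that the leftover log-Laplace-type term is automatically nonnegative by Jensen against the mean-zero Gaussian $\mu_0$. One could alternatively observe that $-\Ex_{\mu_0}\log(\pi/\mu_0) = \DKL{\mu_0\river\pi}$ and directly lower-bound $\pi(y)/\mu_0(y)$ via Jensen, which amounts to the same calculation.
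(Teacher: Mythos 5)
Your proof is correct, and it takes a genuinely different route from the paper's. The paper's proof factors through the general initialization bound (\Cref{lem:initialization bound general}), which relies on a second-order Taylor expansion and the smoothness bound $\nabla^2 V \preceq c^{-1} I$ obtained from \Cref{lem:computation}; it then plugs in the explicit value $V(0) = \frac{n}{2}\log(2\pi c) + \frac{n}{2c}$ and watches the logarithmic terms cancel. You instead compute the likelihood ratio $\pi/\mu_0$ exactly, exploit $\norm{x}^2 = n$ on the hypercube to extract the constant $n/(2c)$, and observe via Jensen (i.e.\ $\log \E e^Z \ge \E Z$) that the residual log-mixture term integrates to something nonnegative against the mean-zero Gaussian $\mu_0$. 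Your argument is more self-contained and avoids invoking the smoothness of $V$ or the auxiliary lemmas at all; the paper's version is more modular and reuses a general-purpose initialization bound that appears elsewhere. Both yield the same constant $n/(2c)$. One small bookkeeping point worth making explicit: you implicitly use that $\nu$ is normalized ($\sum_x \nu(x) = 1$) when writing the inner sum as $\E_{x\sim\nu}$; this is the same normalization the paper uses when asserting $Z = 1$.
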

\begin{proof}
We can write
\[\pi(y) = (2\pi c)^{-n/2} \sum_{x\in \set{\pm 1}^n} \nu(x) \exp\bigl(-\frac{\norm{y-x}^2}{2c}\bigr)\,. \]
This distribution is normalized so that $Z = 1$, and
\[\pi(0)  = (2\pi c)^{-n/2} \sum_{x\in \set{\pm 1}^n} \nu(x) \exp\bigl(-\frac{n}{2c}\bigr) = (2\pi c)^{-n/2} \exp\bigl(-\frac{n}{2c}\bigr)\,. \]
Thus, $V(0) = -\log \pi(0) =\frac{n}{2} \log (2\pi c) + \frac{n}{2c}  . $ 
 By~\Cref{lem:computation}, $\nabla^2 V \preceq I/c$. Thus, we can apply \cref{lem:initialization bound general} with $\beta = c^{-1}$ and $\sigma^2 = c.$ Rearranging gives the desired inequality.
    \end{proof}

\begin{proof}[Proof of~\Cref{thm:rnc discrete sampling}]
Let $c $ be such that $ \cov{\tau_y \mu} \preceq \frac{c}{2} I$  for all $y \in \R^n$.
Suppose we have two executions of \cref{alg:sampling to counting reduction}: one using the approximate continuous sampling algorithm resulting in $w_0,\dots,w_T$, and one using exact samples resulting in $w_0',\dots,w_T'$. Note that $w_{i} = w_{i-1} + x_i/c$ where $x_i$ is the output of~\Cref{alg:main} on input $ \pi = \tau_{w_{i-1}}\mu \ast \Normal{0,cI} $ and $ w'_i = w'_{i-1} + x'_i/c$ where $x'_i \sim \tau_{w'_{i-1}}\mu \ast \Normal{0,cI}. $ We choose the parameter of~\Cref{alg:main} so that
\begin{align*}
    \dTV{\law(x_i), \tau_{w_{i-1}}\mu \ast \Normal{0,cI}} \leq \eta
\end{align*}
for some $\eta$ to be specified later.

%We bound the TV distance between the random variables $w_i$ and $w_i'$ inductively. For $i=0$ this distance is $0$. By triangle inequality for the TV distance, %We claim that
%\[ C_i:=\dTV{w_i, w'_i} = \dTV{w_{i-1} + x_i/c, w'_{i-1}+x'_i/c} \leq \dTV{w_{i-1}, w'_{i-1}} + \dTV{x_i, x'_i}.  \]
%Now, again by triangle inequality, and by \cref{thm:sample-main} 
%\begin{align*}
%  \dTV{x_i, x'_i}  &\leq \dTV{x'_i,\tau_{w_{i-1}}\mu \ast \Normal{0,cI}} + \dTV{ \tau_{w_{i-1}}\mu \ast \Normal{0,cI}, x'_i} \\
%  &\leq \varepsilon + \dTV{\tau_{w_{i-1}}\mu \ast \Normal{0,cI}, \tau_{w'_{i-1}}\mu \ast \Normal{0,cI}} \leq \varepsilon + \dTV{w_{i-1}, w'_{i-1}}.  
%\end{align*}
%
%Taken everything together, we get
%\[C_i=\dTV{w_i, w'_i} \leq 2\dTV{w_{i-1}, w'_{i-1}} + \varepsilon \]
%thus 
%\[ C_i = \varepsilon (2^0 + 2^1 +\dots + 2^{T-1}) \leq 2^T \varepsilon.\]
%We choose $\varepsilon = \epsilon 2^{-(T+1)}.$ %then $\log(1/\varepsilon) = O(c \log(n/\epsilon) ).$ 

Recall that the total variation distance is also characterized as the smallest probability of error when we couple two random variables according to the two measures, i.e.,
\begin{align*}
    \dTV{\rho_1,\rho_2} = \inf\bigl\{\Pi(X_1 \ne X_2) \bigm\vert \Pi~\text{is a coupling of}~(\rho_1,\rho_2)\bigr\}\,.
\end{align*}
On the first iteration, we can couple $x_1$ with $x_1'$ so that they are equal to each other with probability at least $1-\eta$. If $x_1 = x_1'$, then $w_1 = w_1'$, and repeating the argument on this event we can couple $x_2$ to $x_2'$ so that $x_2 = x_2'$ with probability at least $1-\eta$.
After $T$ iterations, by the union bound, we have $w_T = w_T'$ with probability at least $1-T\eta$.

By triangle inequality, the data-processing inequality, and~\Cref{lem:tv-close},
\begin{align*}
   \dTV{\law(\sign{w_T}), \mu}
   &\leq \dTV{\law(\sign{w_T}), \law(\sign{w'_T})} + \dTV{\law(\sign{w'_T}),\mu} \\
   &\leq T\eta + \varepsilon/2\,,
\end{align*}
provided we choose $T = \Theta(c \log (n/\varepsilon))$ so that $\dTV{\law(\sign{w'_T}),\mu} \leq \varepsilon/2$.
We then choose $\eta = \varepsilon/(2T)$, which ensures that $\dTV{\law(\sign{w_T}), \mu} \le \varepsilon$.

In each iteration of the ``for'' loop in~\Cref{alg:sampling to counting reduction}, we want to approximately sample from $\pi = \tau_{w'_{i-1}} \mu \ast \Normal{0,cI}, $ which is $(2c)^{-1}$-strongly log concave and $c^{-1}$-log-smooth %well-conditioned log concave with $\kappa = \theta(1)$ 
by~\Cref{lem:computation}. By~\Cref{lem:initialization bound for discrete convolve with Gaussian}, $\DKL{\mu_0\river \pi} \leq \poly(n)$ for $\mu_0 = \Normal{0,c I}$. 
Thus, by~\Cref{thm:sample-main}, to sample $x'_i$ such that $\dTV{\law(x'_i),\tau_{w'_{i-1}} \mu \ast \Normal{0,cI}} \leq O(\varepsilon/(c\log(n/\varepsilon)))$,~\Cref{alg:main}  uses $P = O(\log^2(cn/\varepsilon))$ parallel iterations, $M = \widetilde O(c^2 n/\varepsilon^2)$ processors, and $M P = \widetilde{O}(c^2 n/\varepsilon^2)$ $\delta$-approximate gradient evaluations with $\delta = \Theta(\varepsilon/\sqrt{c})$. By~\Cref{lem:computation}, each gradient evaluation can be implemented using $O(n)$ processors, $O(1)$ parallel iterations, and $O(n)$ total calls to $O(\delta \sqrt c/n) = O(\varepsilon /n)$-approximate Laplace transform oracles.  %involves computing $\$
%By \cref{lem:computation}, each gradient evaluation involves (approximate) computing $ \frac{\mean {\tau_w \mu} }{c}.$

Hence,~\Cref{alg:sampling to counting reduction} takes $P T = O(c  \log^3(cn/\epsilon)) $ parallel iterations, $M = \widetilde O(c^2 n^2/\varepsilon^2)$ processors, and $\widetilde{O}(c^2 n^2/\varepsilon^2)$ total calls to $O(\varepsilon/n)$-approximate Laplace transform oracles.
\end{proof}

	\PrintBibliography
\end{document}